\DeclareMathOperator{\Pos}{Pos}
\def\regw#1{\Reg{(#1)}}
\DeclareMathOperator{\Reg}{RegExp}
\DeclareMathOperator{\Last}{Last}
\DeclareMathOperator{\last}{Fl_0} 
\DeclareMathOperator{\Follw}{Fl_>}
\DeclareMathOperator{\First}{First}
\DeclareMathOperator{\Follow}{Follow}
\DeclareMathOperator{\Leave}{Leaves}
\DeclareMathOperator{\E}{E}
\DeclareMathOperator{\h}{H}
\DeclareMathOperator{\f}{F}
\DeclareMathOperator{\ar}{R}
\DeclareMathOperator{\G}{G}
\DeclareMathOperator{\Fir}{Fr_>}
\DeclareMathOperator{\Firs}{Fr_0}
\def\b#1{\overline{#1}}
\tikzstyle{every picture}=[>=stealth',shorten >=1pt,node distance=1.44cm,bend angle=45,initial text=,every state/.style={inner sep=0.75mm, minimum size=1mm},font=\scriptsize]
\DeclareMathOperator{\rooot}{root}
\def\first#1{\First{(#1)}}
\def\Po#1#2{\Pos_{#1}{(#2)}}
\def\firstt#1{\Fir{(#1)}}
\def\firs#1{\Firs{(#1)}}
\def\Fll#1#2#3{\Follow{(#1,#2,#3)}}
 \def\las#1#2#3{\last{(#1,#2,#3)}}
\def\Fw#1#2#3{\Follw{(#1,#2,#3)}}
\title{Algorithm for the $k$-Position Tree Automaton Construction\thanks{D. Ziadi was supported by the MESRS - Algeria under Project 8/U03/7015.}} 
\author{Nadia Ouali Sebti \and Djelloul Ziadi \thanks{\email{$\{$Nadia.Ouali-Sebti, Djelloul.Ziadi$\}$@univ-rouen.fr}}}
\institute{Laboratoire LITIS - EA 4108 Universit\'e de Rouen, Avenue de l'Universit\'e \\76801 Saint-\'Etienne-du-Rouvray Cedex.} 
\authorrunning{N. Ouali Sebti, D. Ziadi}
\date{}
\begin{document}
\setcounter{footnote}{0}
\maketitle             
\begin{abstract}
The word position automaton was introduced by Glushkov and McNaughton in the early $1960$. This automaton
is homogeneous and has $(||\E||+1)$ states for a word expression of alphabetic width $||\E||$.
This kind of automata is extended to regular tree expressions.

In this paper, we give an efficient algorithm that computes the $\Follow$ sets, which are used in different algorithms of conversion of a regular expression into tree automata. In the following, we consider the $k$-position tree automaton construction. We prove that for a regular expression $\E$ of a size $|\E|$ and alphabetic width $||\E||$, the $\Follow$ sets can be computed in $O(||\E||\cdot |\E|)$ time complexity.   
\end{abstract}
\section{Introduction}

This paper is an extended version of ~\cite{Ouali}.

Regular expressions, which are finite representatives of potentially infinite languages, are widely used in various application areas such as XML Schema Languages~\cite{xml}, logic and verification, \emph{etc.} The concept of word regular expressions  has been extended to tree regular expressions.

 In the case of words, it is agreed that each regular expression can be transformed into a non-deterministic finite automaton.
Computer scientists have been interested in designing efficient algorithms for the computation of the position automaton. 
Three well-known algorithms for the computation of this automaton exist. The first makes use 
of the notion of star normal form ~\cite{Brug} of a regular expression   . The second is based on a
lazy computation technique ~\cite{Paig}. The third is built on the so-called ZPC-structure ~\cite{ZPC}.
The complexity of these three algorithms is quadratic with regard to the size of the regular expression.

This study is motivated by the development of a library of functions for handling rational kernels ~\cite{mohri1} in the case of trees. The first problem consists of the conversion of a regular expression into a tree automaton. 

Recently Kuske and Meinecke~\cite{automate2} proposed an Algorithm to construct an equation automaton~\cite{antimirov,ouardi} from a regular tree expression  $\E$  with an $O(\ar\cdot{|\E|}^2)$ time complexity where $|\E|$ is the size of $\E$ and $\ar$ is the maximal rank appearing in the ranked alphabet. 
This algorithm is an adaptation to trees of the one given by Champarnaud and Ziadi in the case of words~\cite{ZPC}. 
This generalization is interesting although the adaptation of the word algorithm to trees is not obvious at all.
Indeed, the Champarnaud and Ziadi Algorithm, for the construction of the set of transitions, is based 
on the computation of some function called "$\Follow$" which is not yet defined on trees.  
Notice that the star normal form of a regular tree expression $\E$ can not be defined, this notion doesn't make sense.  
For these reasons the definition of the $\Follow$ function in the case of trees is given in this paper, while an efficient algorithm for its computation (computation of the $k$-position tree automaton) is proposed. 

The paper is organized as follows: Section~\ref{sec prelim} outlines finite tree automata over ranked alphabets, regular tree expressions, and linearized regular tree expressions. Next, in Section~\ref{sec automata} the notions of $\First$ and $\Follow$ of regular expressions and the $k$-position automaton are recalled. Then, in Section~\ref{sec algo} we present an efficient algorithm which builds the $k$-position tree automaton with an $O(||\E||\cdot|\E|)$ time complexity. Finally, the different results described in this paper are given in the conclusion. 

\section{Preliminaries}\label{sec prelim}
    Let $(\Sigma,\mathrm{r})$ be  \emph{a ranked alphabet}, where $\Sigma$ is a finite set and $\mathrm{r}$ represents the  \emph{rank} of $\Sigma$ which is a mapping from $\Sigma$ into $\mathbb{N}$. The set of symbols of rank $n$ is denoted by $\Sigma_{n}$. The elements of rank $0$ are called  \emph{constants}. A \emph{tree} $t$ over   $\Sigma$ is inductively defined as follows: $t=a,~ t=f(t_1,\dots,t_k)$ where $a$ is any symbol in  $\Sigma_0$, $k$ is any integer satisfying $k\geq 1$, $f$ is any symbol in $\Sigma_k$ and $t_1,\dots,t_k$ are any $k$ trees over $\Sigma$. We denote by $T_{\Sigma}$ the set of trees over $\Sigma$.  \emph{A tree language} is a subset of $T_{\Sigma}$. Let $\Sigma_{>}=\Sigma\backslash \Sigma_0$ denote the set of  \emph{non-constant symbols} of the ranked alphabet $\Sigma$. \emph{A Finite Tree Automaton} (FTA)~~\cite{automate1,automate2} ${\cal A}$ is a tuple $( Q, \Sigma, Q_{T},\Delta )$ where $Q$ is a finite set of states, $Q_T \subset Q$ is the set of \emph{final states} 
and  $\Delta\subset\bigcup_{n\geq 0}(Q \times \Sigma_{n}\times Q^n)$ is the set of  \emph{transition rules}. This set is equivalent to the function $\Delta$ from $Q^n \times \Sigma_n$ to $2^Q$ defined  by $(q,f,q_1,\dots,q_n)\in \Delta\Leftrightarrow q\in \Delta(q_1,\dots,q_n,f)$. The domain of this function can be extended to $(2^Q)^n \times \Sigma_n$
as follows: $\Delta(Q_1,\dots,Q_n,f)=\bigcup_{(q_1,\dots,q_n)\in Q_1\times\dots\times Q_n} \Delta(q_1,\dots,q_n,f)$.  Finally, we denote by $\Delta^*$ the function from  $T_{\Sigma}\rightarrow 2^Q$  defined for any tree in $T_{\Sigma}$ as follows: 
  $\Delta^*(t)= \Delta(a)$ if  $t=a$ with $a\in \Sigma_0$, $\Delta^*(t)=\Delta(\Delta^*(t_1),\dots,\Delta^*(t_n),f)$
   if $t=f(t_1,\dots,t_n)$ with $f\in {\Sigma}_n$ and $t_1,\ldots,t_n\in T_{\Sigma}$. 
  A tree is \emph{accepted} by ${\cal A}$ if and only if $\Delta^*(t)\cap Q_T\neq \emptyset$.   
 
  The language ${\cal L(A)}$ \emph{recognized} by $A$ is the set of trees accepted by ${\cal A}$  \emph{i.e.} ${\cal L(A)}=\{t\in T_{\Sigma}\mid \Delta^*(t)\cap Q_T\neq \emptyset\}$. 

For any integer $n\geq 0$, for any $n$ languages $L_1, \dots, L_n\subset T_{\Sigma}$, and for any symbol  $f\in \Sigma_n$, $f(L_1, \dots, L_n)$ is the tree language $\lbrace f(t_1, \dots, t_n)\mid t_i\in L_i\rbrace$. The \emph{tree substitution} of a constant $c$ in $\Sigma$ by a language $L\subset T_{\Sigma}$ in a tree $t\in T_{\Sigma}$, denoted by $t\lbrace c \leftarrow L\rbrace$, is the language inductively defined by:
   $L$ if $t=c$; $\lbrace d\rbrace$ if $t=d$ where $d\in \Sigma_0\setminus\{c\}$; $f(t_1\lbrace c \leftarrow L\rbrace, \dots, t_n\lbrace c \leftarrow L\rbrace)$ if $t=f(t_1, \dots, t_n)$ with $f\in\Sigma_n$ and $t_1, \dots, t_n$ any $n$ trees over $\Sigma$.   
    Let $c$ be a symbol in $\Sigma_0$. The $c$-\emph{product} $L_1\cdot_{c} L_2$ of two languages $L_1, L_2\subset T_{\Sigma}$ is  defined by $L_1\cdot_{c} L_2=\bigcup_{t\in L_1}\lbrace t\lbrace c \leftarrow L_2\rbrace \rbrace$. The \emph{iterated $c$-product} is  inductively  defined for $L\subset T_{\Sigma}$ by:  $L^{0_c}=\lbrace c \rbrace$ and $L^{{(n+1)}_c}=L^{n_c}\cup L\cdot_{c} L^{n_c}$. The $c$-\emph{closure} of $L$ is defined by  $L^{*_c}=\bigcup_{n\geq 0} L^{n_c}$.
    
A \emph{regular expression } over a ranked alphabet $\Sigma$ is inductively defined by  $\E=0$, $\E\in \Sigma_0$, $\E=f(\E_1, \cdots, \E_n)$, $\E=(\E_1+\E_2)$, $\E=(\E_1\cdot_c \E_2)$, $\E=({\E_1}^{*_c})$, where $c\in\Sigma_0$, $n\in\mathbb{N}$, $f\in\Sigma_n$ and $\E_1,\E_2 ,\dots, \E_n$ are any $n$ regular expression s over $\Sigma$. Parenthesis can be omitted when there is no ambiguity. We write $\E_1=\E_2$ if $\E_1$ and $\E_2$ graphically coincide. We denote by $\regw{\Sigma}$ the set of all regular expression s over $\Sigma$. Every regular expression  $\E$ can be seen as a tree over the ranked alphabet $\Sigma\cup \{+,\cdot_c, *_c \mid c\in \Sigma_0\}$ where $+$ and $\cdot_c$ can be seen as symbols
 of rank $2$ and $*_c$ has rank $1$. This tree is the syntax-tree $T_{\E}$ of $\E$. We denote by ${|\E|}_f$ the number of occurrences of a symbol $f$ in a regular expression $\E$. The \emph{alphabetic width} $||\E||$ of $\E$ is the number of occurrences of symbols of $\Sigma_{>}$ in $\E$ ( $||\E||=(\sum_{f\in\Sigma_{>}}{|\E|}_f$). \emph{The size} $|\E|$ of $\E$ is the size of its syntax tree $T_{\E}$. The \emph{language} $\llbracket \E\rrbracket$  \emph{denoted by} $\E$ is inductively defined by
 $\llbracket 0\rrbracket=\emptyset$, $\llbracket c\rrbracket=\lbrace c\rbrace$, $\llbracket f(\E_1,\E_2 , \cdots, \E_n)\rrbracket= f(\llbracket \E_1 \rrbracket, \dots, \llbracket \E_n \rrbracket)$, $\llbracket \E_1+ \E_2\rrbracket=\llbracket \E_1\rrbracket\cup\llbracket \E_2 \rrbracket$, $\llbracket \E_1\cdot_{c} \E_2\rrbracket=\llbracket \E_1\rrbracket \cdot_{c}\llbracket \E_2 \rrbracket$, $\llbracket {\E_1}^{*_c}\rrbracket=\llbracket \E_1\rrbracket^{*_c}$ where  $n\in\mathbb{N}$, $\E_1,\E_2,\dots, \E_n$ are any $n$ regular expression s, $f\in\Sigma_n$  and $c\in \Sigma_0$. It is well known that a tree language  is accepted by some tree automaton if and only if it can be denoted by a regular expression  ~\cite{automate1,automate2}.
A regular expression  $\E$ defined over $\Sigma$ is  \emph{linear} 
if every symbol of rank greater than $1$ appears at most once in $\E$. Note that any constant symbol may occur more than once. Let $\E$ be a regular expression  over $\Sigma$. The  \emph{linearized regular expression } $\b\E$ in $\E$ of a regular expression  $\E$ is obtained from $\E$ by marking differently all symbols of a rank  greater than or equal to $1$ (symbols of $\Sigma_>$). The marked symbols form together with the constants in $\Sigma_0$ a ranked alphabet $\mathrm{Pos}_E(E)$ the symbols of which we call \emph{positions}.
The mapping $h$ is defined from $\Po{\E}{\E}$ to $\Sigma$ with $h(\Po{\E}{\E}_m)\subset \Sigma_m$ for every  $m\in \mathbb{N}$. It associates with a marked symbol $f_j\in  \Po{\E}{\E}_{>}$ the symbol $f\in \Sigma_>$ and for a symbol $c\in \Sigma_0$ the symbol $h(c)=c$.
We can extend the mapping $h$ naturally to  $\regw{\Po{\E}{\E}}\rightarrow\regw{\Sigma}$ by $h(a)=a$, $h(\E_1+\E_2)=h(\E_1)+h(\E_2)$, $h(\E_1\cdot_c\E_2)=h(\E_1)\cdot_c h(\E_2)$, $h(\E_1^{*_c})=h(\E_1)^{*_c}$, $h(f_j(\E_1,\dots,\E_n))=f(h(\E_1),\dots,h(\E_n))$, with $n\in\mathbb{N}$, $a\in \Sigma_0$, $f\in \Sigma_n$, $f_j\in \Po{\E}{\E}_n$ such that $h(f_j)=f$ and $\E_1,\dots,\E_n$ any regular expression s over $\Po{\E}{\E}$.

\section{The $k$-Position Tree Automaton}\label{sec automata}

The set of positions associated to $\E$ are straightforwardly deduced from the set of symbols associated to $\E$. 
In order to construct a non$-$deterministic finite automaton (position tree automaton) associated to the 
regular expression  $\E$ that recognizes $\llbracket \E\rrbracket$,
we need to define two sets, the set $\first{\b\E}$ and the set $\Follow((\b\E,f_j,k)$ for a position $f_j\in \Po{\E}{\E}_{>}$.

  In the following of this section, $\E$ is a regular expression  over a ranked alphabet $\Sigma$. 
  The set of symbols in $\Sigma$ that appear in an expression $\f$ is denoted by $\Sigma^{\f}$.

  In this section, we show how to compute the $k$-position tree automaton of a regular expression   $\E$, recognizing $\llbracket \E\rrbracket$. This is an extension of the well-known position automaton~~\cite{glushkov} for word regular expression  s where the $k$ represents the fact that any $k$-ary symbol is no longer a state of the automaton, but is exploded into $k$ states.
  The same method was presented independently by McNaughton and Yamada~~\cite{mcnaughton60}.  
   Its computation is based on the computations of particular \emph{position functions}, defined in the following.
  
  In what follows, for any two trees $s$ and $t$, we denote by $s \preccurlyeq t$ the relation "$s$ is a subtree of $t$".
  Let $t=f(t_1,\dots,t_n)$ be a tree. 
We denote by $\rooot(t)$ the root of $t$, by $k\mbox{-}\mathrm{child(t)}$ the $k^{th}$ child of $f$ in $t$, that is the root of $t_k$ if it exists, and by $\Leave(t)$ the set of the leaves of $t$, \emph{i.e.} $\{s\in \Sigma_0\mid s\preccurlyeq t\}$. 
  We denote by $\rooot(t)$ the root of $t$, by $k\mbox{-}\mathrm{child(t)}$ the $k^{th}$ child of $f$ in $t$, that is the root of $t_k$ if it exists, and by $\Leave(t)$ the set of the leaves of $t$, \emph{i.e.} $\{s\in \Sigma_0\mid s\preccurlyeq t\}$. 
  
  Let $\E$ be a regular expression  and $\b\E$ its linearized form, $1\leq k\leq m$ be two integers and $f$ be a symbol in $\Sigma_m$ and $f_j$ be a position in $\Po{\E}{\E}_m$ with $h(f_j)=f$.

  The set $\First(\b\E)$ is the subset of $\Po{\E}{\E}$ defined by $\{\rooot(t)\in \Po{\E}{\E} \mid t\in \llbracket\b\E \rrbracket\}$; The set $\Follow(\b\E,f_j,k)$ is the subset of $\Po{\E}{\E}$ defined by $\{g_i\in \Po{\E}{\E} \mid \exists t\in \llbracket \b\E \rrbracket, \exists s\preccurlyeq t, \mathrm{root}(s)=f, k\mbox{-}\mathrm{child(s)}=g_i\}$; The set $\Last(\b\E)$ is the subset of $\Po{\E}{\E}_0$ defined by $\Last(\b\E)=\displaystyle\bigcup_{t\in\llbracket \b\E\rrbracket}\Leave(t)$.
  
 \begin{example}\label{Pos Automat}
 Let $\Sigma=\Sigma_0\cup\Sigma_1\cup\Sigma_2$ be defined by $\Sigma_0=\{a,b,c\}$, $\Sigma_1=\{f,h\}$ and $\Sigma_2=\{g\}$.
    Let us consider the regular expression   $\E$ and its linearized form defined by:

\noindent$\E=(f(a)^{*_a}\cdot_a b+ h(b))^{*_b}+g(c,a)^{*_c}\cdot_c (f(a)^{*_a}\cdot_a b+ h(b))^{*_b}$,

\noindent$\b\E=(f_1(a)^{*_a}\cdot_a b+ h_2(b))^{*_b}+g_3(c,a)^{*_c}\cdot_c (f_4(a)^{*_a}\cdot_a b+ h_5(b))^{*_b}$.
      
The language denoted by $\b\E$ is $\llbracket \b\E\rrbracket=\{b, f_1(b),f_1(f_1(b)),f_1(h_2(b)),h_2(b),\\
h_2(f_1(b)),h_2(h_2(b)), \ldots,g_3(b,a), g_3(g_3(b,a),a),g_3(f_4(b),a),g_3(h_5(b),a) ,f_4(f_4(b)),\\ f_4(h_5(b),h_5(f_4(b)),h_5(h_5(b)),\ldots\}$.

Consequently, $\First(\b\E)=\{b,f_1,h_2,g_3,f_4,h_5\}$ and $\Follow(\b\E,f_1,1)=\{b,f_1,h_2\}$, $\Follow(\b\E,h_2,1)=\{b,f_1,h_2\}$, $\Follow(\b\E,g_3,1)=\{b,g_3,f_4,h_5\}$, $\Follow(\b\E,g_3,2)=\{a\}$, $\Follow(\b\E,f_4,1)=\{b,f_4,h_5\}$,  $\Follow(\b\E,h_5,1)=\{b,f_4,h_5\}$. 
\end{example}

The two functions $\mathrm{First}$ and $\mathrm{Follow}$ are sufficient to construct the \emph{$k$-position tree automaton} from a regular expression  $\E$. 
 
  \begin{definition}\label{def aut pos}~\cite{arxiv}
    Let $\E$ be a regular expression , $f$ and $g$ be symbols in $\Sigma$ and $f_j$ and $g_i$ be positions in $\Po{\E}{\E}$ with $h(f_j)=f$ and $h(g_i)=g$. The $k$-\emph{Position Tree Automaton} ${\cal P_{\E}}$ is the automaton $(Q,\Sigma,Q_T,\Delta)$ defined by 
       
\centerline{$\begin{array}{r@{\ }c@{\ }l}
      Q=  & & \{f^k_j \mid f_j\in \Po{\E}{\E}_{m}\wedge 1\leq k\leq m\}\\
      & \cup & \{\varepsilon^1\}\mbox{ with } \varepsilon^1 \mbox{ a new symbol not in } \Sigma,\; Q_T=\{\varepsilon^1\}\\ 
\Delta = & & \{(f^k_j,h(g_i),g^1_i,\ldots,g^n_i)\mid g_i \in \Follow(\E,f_j,k)\} \\
            & \cup &  \{(\varepsilon^1,h(f_j),f^1_j,\ldots,f^m_j)\mid f_j \in\First(\E)\}\\
      \end{array}$}

  \end{definition}
  
  It has been shown in ~\cite{arxiv} that the $k$-position tree automaton of $\E$ accepts $\llbracket \E\rrbracket$, hence the following theorem:

    \begin{theorem}\label{thm lang pe eq e}~\cite{arxiv}
     Let $\E$ be a regular expression, then ${\cal L}({\cal P}_{\E})=\llbracket \E\rrbracket$.
  \end{theorem}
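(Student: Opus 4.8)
The plan is to reduce the statement to the linearized expression $\b\E$ by means of the alphabetic morphism $h$, and to show that accepting computations of ${\cal P}_\E$ on a tree $t$ are in bijection with position-labelings of $t$ that are \emph{locally consistent} with $\first{\b\E}$ and $\Follow$. First I would make this correspondence precise. Given a successful run of ${\cal P}_\E$ on $t\in T_\Sigma$, the root must carry the only final state $\varepsilon^1$, and the transition applied at the root forces a position $f_j\in\first{\b\E}$ with $h(f_j)=\rooot(t)$ while sending the $\ell$-th child of the root into state $f^\ell_j$. Propagating this downward, a node $v$ that is the $k$-th child of its parent $u$ must be in a state of the form $p^k$, where $p$ is the position chosen at $u$, and the transition applied at $v$ selects a position $q$ with $h(q)$ equal to the symbol of $v$ and with $q\in\Follow(\b\E,p,k)$. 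Hence every accepting run determines a labeling $\pi$ assigning to each node a position so that $h(\pi(v))$ is the symbol at $v$, $\pi(\rooot(t))\in\first{\b\E}$, and $\pi(v)\in\Follow(\b\E,\pi(u),k)$ whenever $v$ is the $k$-th child of $u$; conversely any such $\pi$ yields a valid run by assigning $\varepsilon^1$ to the root and $\pi(u)^k$ to each $k$-th child $v$ of $u$. Since such a labeling $\pi$ is nothing but a tree $\b t\in T_{\Po{\E}{\E}}$ with $h(\b t)=t$, I obtain that $t\in{\cal L}({\cal P}_\E)$ iff there is $\b t$ with $h(\b t)=t$ satisfying the two membership constraints above.

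The heart of the argument is then a locality characterization, which I would isolate as a lemma: for every $\b t\in T_{\Po{\E}{\E}}$, one has $\b t\in\llbracket\b\E\rrbracket$ if and only if $\rooot(\b t)\in\first{\b\E}$ and, for every subtree $f_j(\b{t_1},\dots,\b{t_m})\preccurlyeq\b t$ and every $1\le k\le m$, $\rooot(\b{t_k})\in\Follow(\b\E,f_j,k)$. The forward implication is immediate from the definitions: if $\b t\in\llbracket\b\E\rrbracket$ then $\rooot(\b t)$ is by definition an element of $\first{\b\E}$, and each parent/child pair occurring inside $\b t$ is itself a witness of the corresponding membership in the appropriate $\Follow$ set.

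I would prove the converse by structural induction on $\E$, establishing simultaneously the inductive equations for $\first{\cdot}$, $\Follow$ and $\Last$ over the constructors $0$, $\Sigma_0$, $f(\cdot)$, $+$, $\cdot_c$ and $*_c$. The key point that makes the per-coordinate $\Follow$ sets sufficient is linearity: since every symbol of rank $\ge 1$ occurs exactly once in $\b\E$, a non-constant position pins down a unique occurrence in the syntax tree, and the product structure $f(L_1,\dots,L_m)$ makes the $m$ subtrees mutually independent, so that no inter-coordinate correlation among siblings can be lost. I expect the steps for the $c$-product $\E_1\cdot_c\E_2$ and the $c$-closure ${\E_1}^{*_c}$ to be the main obstacle: there the constant $c$ is a substitution point that may occur at several leaves, and since constants are not linearized these leaves cannot be told apart, so one must check that local consistency at the ``seam'' — where a $c$-leaf of an $\E_1$-tree, recorded through $\Last$ relative to $c$, is replaced by an $\E_2$-tree whose root lies in $\first{\b{\E_2}}$ — genuinely witnesses $\b t\in\llbracket\b{\E_1}\cdot_c\b{\E_2}\rrbracket$, the closure case requiring an additional induction on the number of substituted $c$-occurrences.

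Once the lemma is in hand, the theorem follows by combining it with the run/labeling correspondence and with the fact, provable by a routine induction using the morphism equations for $h$ recalled in Section~\ref{sec prelim}, that $h(\llbracket\b\E\rrbracket)=\llbracket\E\rrbracket$. Indeed, a tree $t$ lies in ${\cal L}({\cal P}_\E)$ exactly when some $\b t$ with $h(\b t)=t$ is locally consistent, hence by the lemma exactly when some $\b t\in\llbracket\b\E\rrbracket$ satisfies $h(\b t)=t$, that is, exactly when $t\in\llbracket\E\rrbracket$.
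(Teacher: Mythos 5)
The paper itself offers no proof of this theorem: it is imported wholesale from the cited reference, so there is no in-paper argument to measure yours against. Your strategy is nonetheless the natural one for Glushkov-type constructions and is sound in outline. The correspondence between accepting bottom-up runs of ${\cal P}_{\E}$ and position-labelings $\b t$ with $h(\b t)=t$ is a direct unwinding of Definition~\ref{def aut pos}, the forward implication of your locality lemma is immediate from the semantic definitions of $\First$ and $\Follow$, and the reduction via $h(\llbracket\b\E\rrbracket)=\llbracket\E\rrbracket$ is routine. You also correctly identify the two structural facts that make per-coordinate $\Follow$ sets sufficient: linearity pins each non-constant position to a unique occurrence, and both $f(L_1,\dots,L_n)$ and $t\lbrace c\leftarrow L\rbrace$ substitute siblings and distinct $c$-leaves independently, so no inter-sibling correlation can be lost.

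Two caveats. First, the heart of the matter --- the converse of the locality lemma for $\E_1\cdot_c\E_2$ and ${\E_1}^{*_c}$, where a locally consistent $\b t$ must be cut at its $c$-seams into a tree of $\llbracket\b{\E_1}\rrbracket$ (with $\Last$ certifying which $c$-leaves are replaceable) whose seams carry trees rooted in $\first{\b{\E_2}}$, with an inner induction bounding the number of substituted occurrences in the closure case --- is announced rather than carried out; as written you have a correct plan, not a complete proof. Second, if in your simultaneous induction you adopt the paper's inductive equations verbatim, note that $\First({\E_1}^{*_c})=\First(\E_1)$ (and the corresponding clauses for $\Last$ and $\Follow$) are only valid after the normalization replacing ${\f}^{*_c}$ by $(\f+c)^{*_c}$, which the paper introduces only in Section~\ref{sec algo}; semantically $c\in\llbracket{\E_1}^{*_c}\rrbracket$ always, so the unnormalized equation must read $\First({\E_1}^{*_c})=\First(\E_1)\cup\{c\}$, otherwise the star case of your induction breaks.
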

    
\begin{example}
The $k$-Position Automaton ${\cal  P}_{\E}$ associated with $\E$ of Example~\ref{Pos Automat} is given in Figure~\ref{fig a t e2}.
The set of states is $Q=\{\varepsilon^1,f^1_1,h^1_2,g^1_3,g^2_3,f^1_4,h^1_5\}$. The set of final states is $Q_T=\{\varepsilon^1\}$. 
The set of transition rules $\Delta$ is 

\centerline{
    $f(f_1^1)\rightarrow \varepsilon^1$, $f(f_1^1)\rightarrow f_1^1$, $f(f_1^1)\rightarrow h_2^1$,
}

\centerline{
    $h(h_2^1)\rightarrow \varepsilon^1$, $h(h_2^1)\rightarrow f_1^1$, $h(h_2^1)\rightarrow h_2^1$,
}

\centerline{
    $g(g_3^1,g_3^2)\rightarrow g_3^1$, $g(g_3^1,g_3^2)\rightarrow \varepsilon^1$,
}

\centerline{
    $f(f_4^1)\rightarrow \varepsilon^1$, $f(f_4^1)\rightarrow g_3^1$, $f(f_4^1)\rightarrow f_4^1$, $f(f_4^1)\rightarrow h_5^1$,
}

\centerline{
    $h(h_5^1)\rightarrow \varepsilon^1$, $h(h_5^1)\rightarrow g_3^1$, $h(h_5^1)\rightarrow f_4^1$, $h(h_5^1)\rightarrow h_5^1$,
}

\centerline{
   $a\rightarrow g_3^2$, $b\rightarrow \varepsilon^1$, $b\rightarrow f_1^1$, $b\rightarrow h_2^1$, $b\rightarrow g_3^1$, $b\rightarrow f_4^1$, $b\rightarrow h_5^1$
}

The $k$-Position Automaton ${\cal P}_{\E}$ associated with $\E$ is represented in Figure~\ref{fig a t e2}.
\end{example}

\begin{figure}[H]
  \centerline{
	\begin{tikzpicture}[node distance=2.5cm,bend angle=30,transform shape,scale=1]
	  \node[accepting,state] (eps) {$\varepsilon^1$};
	  \node[state, above left of=eps] (f11) {$f^1_1$};	
	  \node[state, above right of=eps] (h12) {$h^1_2$}; 
      \node[state, below of=eps] (g13) {$g^1_3$};
      \node[state, below right of=eps,node distance=1.5cm] (cer1) {};
      \node[state, left of=cer1,node distance=0.8cm] (cer) {};
      \node[state, right of=g13,node distance=3.5cm] (g23) {$g^2_3$};
	  \node[state, below left of=g13,node distance=3.5cm] (h15) {$h^1_5$};
      \node[state, below right of=g13,node distance=3.5cm] (f14) {$f^1_4$};
	  \draw (eps) ++(-1cm,0cm) node {$b$}  edge[->] (eps);  
	  \draw (f11) ++(-1cm,0cm) node {$b$}  edge[->] (f11);  
	  \draw (h12) ++(1cm,0cm) node {$b$}  edge[->] (h12); 
	  \draw (h15) ++(0cm,-1cm) node {$b$}  edge[->] (h15);  
	  \draw (g23) ++(1cm,0cm) node {$a$}  edge[->] (g23);  
	  \draw (g13) ++(-1cm,0cm) node {$b$}  edge[->] (g13);    
	  \draw (f14) ++(0cm,-1cm) node {$b$}  edge[->] (f14);
      \path[->]
        (f11) edge[->,below left] node {$f$} (eps)
		(f11) edge[->,loop,above] node {$f$} ()
		(h12) edge[->,bend right,above] node {$h$} (f11)
		(g13) edge[->] node {} (cer1)
		(g23) edge[->] node {} (cer1)	
		(cer1) edge[->,bend right,above right] node {$g$} (eps)
		(g13) edge[->] node {} (cer)
		(g23) edge[->] node {} (cer)	
		(cer) edge[->,bend right=60,above right] node {$g$} (g13)		  
		(h12) edge[->,loop,above] node {$h$} ()
		(h12) edge[->,below right] node {$h$} (eps)
		(f11) edge[->,bend right,above] node {$f$} (h12)
		(h15) edge[->, in=135,out=-135,loop,left] node {$h$} ()	
		(h15) edge[->,above left] node {$h$} (eps)		
		(h15) edge[->,above left] node {$h$} (g13)		
		(h15) edge[->,bend right,above] node {$h$} (f14)
		(f14) edge[->,in=45,out=-45,loop,right] node {$f$} ()	
		(f14) edge[->,bend right,above] node {$f$} (h15)		
		(f14) edge[->,above right] node {$f$} (eps)		
		(f14) edge[->,above right] node {$f$} (g13)
	  ;
      \end{tikzpicture}
  }
  \caption{The $k$-Position Automaton ${\cal P}_{\E}$ of $\E=(f(a)^{*_a}\cdot_a b+ h(b))^{*_b}+g(c,a)^{*_c}\cdot_c (f(a)^{*_a}\cdot_a b+ h(b))^{*_b}$.}
  \label{fig a t e2}
\end{figure}
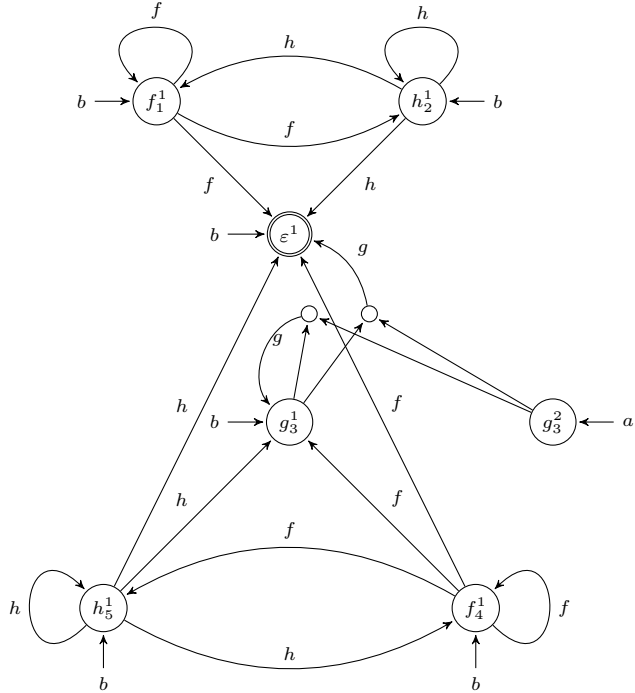	

In the following sections, we will show how we can efficiently compute the function $\Follow(\E,f_j,k)$. This algorithm can be used in different constructions such us the equation automaton~\cite{automate2}, $k$\emph{-C-continuation automaton}~\cite{cie,arxiv} and Follow Automaton~\cite{arxiv}.

\section{Efficient computation of the function $\Follow$}\label{sec algo}
In ~\cite{automate1} Champarnaud and Ziadi gave in the case of words an algorithm with an $O(||\E||\cdot|\E|^2 )$ space and time complexity. 
They enhanced the algorithm to one with an $O(|\E|^2 )$ time and space complexity.
In ~\cite{automate2}, Kuske and Meinecke extend the algorithm based on the notion of word partial derivatives ~\cite{antimirov} to tree partial derivatives in order to compute from a regular expression   $\E$ a tree automaton recognizing $\llbracket \E\rrbracket$. Laugerotte et al. proposed an algorithm for the computation of the position tree automaton and the reduced tree automaton in~\cite{Ouali}. This is an extended version of ~\cite{Ouali}. In~\cite{cie,arxiv2} Mignot et al. gave an efficient algorithm for the computation of the equation automaton using the $k$-c-continuations.

In this section we will describe an algorithm for the computation of the $k$-position tree automaton based on the computation of the $\Follow$ function.

In the following, we will inductively replace each regular subexpression ${\f}^{*_c}$ of $\E$ by the regular subexpression $(\f+c)^{*_c}$. The regular expressions considered thereafter are already dealt by this transformation.

By misuse of language we will denote by $\First(\E)$ for $\First(\b\E)$ and by $\Follow(\E,f_j,k)$ for $\Follow(\b\E,f_j,k)$. Let us first show that the functions $\mathrm{First}$ and $\mathrm{Follow}$ can be inductively computed.
 
  \begin{lemma}\label{firstComput}~\cite{arxiv}
     Let $\E$ be a linear regular expression.  
    The set $\First(\E)$ can be computed as follows:
    
    \centerline{$\First(0)=\emptyset$, $\First(a)=\lbrace a\rbrace$,}
    
\centerline{ $\First(f_j(\E_1, \cdots,\E_m))=\lbrace f_j\rbrace$, 
}    
    
    \centerline{$\First(\E_1+\E_2)=\First(\E_1)\cup\First(\E_2)$,}
    
     \centerline{$\First({\E_1}^{*_c})=\First(\E_1)$,}
    
    \centerline{
      $\First(\E_1\cdot_c \E_2)=
        \left\{
          \begin{array}{l@{\ }l}
            (\First(\E_1)\setminus\{c\}) \cup \First(\E_2) & \text{ if } c\in\llbracket \E_1\rrbracket,\\
            \First(\E_1) & \text{ otherwise.}\\ 
          \end{array}
        \right.
      $
    }
  \end{lemma}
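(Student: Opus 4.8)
The plan is to prove the recursive characterization of $\First(\E)$ by structural induction on the linear regular expression $\E$, following exactly the term grammar by which regular expressions are defined. Recall that $\First(\b\E)$ was \emph{defined} semantically as $\{\rooot(t)\in \Po{\E}{\E}\mid t\in\llbracket\b\E\rrbracket\}$, so each case of the lemma reduces to an equality between this set of roots and the claimed combinatorial expression. Since $\E$ is already linear, I may identify $\E$ with $\b\E$ and reason directly about $\llbracket\E\rrbracket$ and its set of root symbols.

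First I would dispatch the base cases. For $\E=0$ we have $\llbracket 0\rrbracket=\emptyset$, so there are no trees and $\First(0)=\emptyset$. For $\E=a$ with $a\in\Sigma_0$ we have $\llbracket a\rrbracket=\{a\}$, whose only element is the single-node tree $a$, giving $\First(a)=\{a\}$. For $\E=f_j(\E_1,\dots,\E_m)$, every tree in $\llbracket\E\rrbracket=f_j(\llbracket\E_1\rrbracket,\dots,\llbracket\E_m\rrbracket)$ has the form $f_j(t_1,\dots,t_m)$ and hence root $f_j$, so $\First(\E)=\{f_j\}$. The sum case is equally direct: since $\llbracket\E_1+\E_2\rrbracket=\llbracket\E_1\rrbracket\cup\llbracket\E_2\rrbracket$, taking roots distributes over the union, yielding $\First(\E_1)\cup\First(\E_2)$ by the induction hypothesis.

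The two remaining cases, the $c$-product and the $c$-closure, are where the real work lies, because the root of a tree in $L_1\cdot_c L_2$ depends on whether the constant $c$ sits at the root of the tree drawn from $L_1$. The key observation is: for $t_1\in\llbracket\E_1\rrbracket$, the substitution $t_1\{c\leftarrow\llbracket\E_2\rrbracket\}$ changes the root of $t_1$ \emph{only} when $t_1=c$ (the single node $c$), in which case the result is precisely $\llbracket\E_2\rrbracket$ and its roots are $\First(\E_2)$; if $\rooot(t_1)\neq c$, the root is unaffected. Thus a root of a tree in $\llbracket\E_1\cdot_c\E_2\rrbracket$ is either a root symbol of $\llbracket\E_1\rrbracket$ different from $c$, or, when $c\in\llbracket\E_1\rrbracket$, any root in $\First(\E_2)$. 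This gives $(\First(\E_1)\setminus\{c\})\cup\First(\E_2)$ when $c\in\llbracket\E_1\rrbracket$ and $\First(\E_1)$ otherwise (noting that if $c\notin\llbracket\E_1\rrbracket$ then $c\notin\First(\E_1)$, so no removal is needed). For the closure, I would invoke the paper's convention that each ${\f}^{*_c}$ has been rewritten as $(\f+c)^{*_c}$, so that $c\in\llbracket\E_1\rrbracket$ always holds; then since $\llbracket{\E_1}^{*_c}\rrbracket=\bigcup_{n\geq 0}\llbracket\E_1\rrbracket^{n_c}$ and the only new root introduced at level $n=0$ is $c$ (which is already a root of $\llbracket\E_1\rrbracket$), one checks that the set of roots stabilizes at $\First(\E_1)$, giving $\First({\E_1}^{*_c})=\First(\E_1)$.

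The main obstacle I anticipate is the $c$-product case, specifically making rigorous the claim that substitution at interior occurrences of $c$ never alters the root. The clean way to handle this is to prove a small auxiliary fact by induction on the structure of $t_1$: for any $t_1\in T_{\Sigma}$ and any language $L$, the set of roots of $t_1\{c\leftarrow L\}$ equals $\{\rooot(t_1)\}$ if $\rooot(t_1)\neq c$, and equals the set of roots of $L$ if $t_1=c$. Once this is established, the product and closure cases follow by straightforward set manipulation, and the linearity hypothesis guarantees the positions are unambiguous so that no marked symbol is conflated across the two factors.
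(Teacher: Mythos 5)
Your proof is correct in substance, but note that the paper itself offers no proof of this lemma: it is imported wholesale from~\cite{arxiv} and then used as the input to Propositions~1--4, so there is no in-paper argument to measure yours against. Your route --- structural induction on $\E$, unfolding the semantic definition $\First(\b\E)=\{\rooot(t)\mid t\in\llbracket\b\E\rrbracket\}$ in each case, with an auxiliary fact describing how the substitution $t_1\{c\leftarrow L\}$ acts on the root of $t_1$ --- is the natural one, and your remark that the clause $\First({\E_1}^{*_c})=\First(\E_1)$ is only correct because of the paper's standing normalization replacing ${\f}^{*_c}$ by $(\f+c)^{*_c}$ (otherwise $c\in\First({\E_1}^{*_c})$ always, via $L^{0_c}=\{c\}$) is a genuine point the paper leaves implicit. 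One caveat: your auxiliary fact is asserted for ``any language $L$,'' but it fails for $L=\emptyset$ --- if $t_1\neq c$ contains $c$ as a leaf then $t_1\{c\leftarrow\emptyset\}=\emptyset$ has no roots --- so the $\cdot_c$ case needs the hypothesis $\llbracket\E_2\rrbracket\neq\emptyset$. This is a defect inherited from the statement itself, which is false for expressions with $0$ in such positions (e.g.\ the formula gives $\First(f(c)\cdot_c 0)=\{f\}$ while $\llbracket f(c)\cdot_c 0\rrbracket=\emptyset$); it is harmlessly repaired by assuming $\E$ is reduced, but your write-up should state the nonemptiness assumption explicitly where it is used.
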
  
  
  \begin{lemma}\label{Computfollow}~\cite{arxiv}
   Let $\E$ be a linear regular expression, $1\leq k\leq m$ be two integers and $f_j$ be a symbol in $\Sigma_m$. 
  
    The set of symbols $\Follow(\E,f_j,k)$ can be computed inductively as follows:
    
    \centerline{$\Follow(0,f_j,k)=\Follow(a,f_j,k)=\emptyset$,}
    
    \centerline{
      $\Follow(g_i(\E_1,\ldots,\E_m),f_j,k)=
        \left\{
          \begin{array}{l@{\ }l}
            \First(\E_k) & \text{ if } g_i=f_j,\\
            \Follow(\E_l,f_j,k) & \text{ if } \exists l\mid f_j\in \Sigma^{\E_l},\\
            \emptyset & \text{ otherwise.} 
          \end{array}
        \right.$
    }
    
    \centerline{
      $\Follow(\E_1+\E_2,f_j,k)=
        \left\{
          \begin{array}{l@{\ }l}
            \Follow(\E_1,f_j,k) & \text{ if } f_j\in \Sigma^{\E_1},\\
            \Follow(\E_2,f_j,k) & \text{ if } f_j\in \Sigma^{\E_2},\\ 
             \emptyset & \text{ otherwise.} 
          \end{array}
        \right.$
    }
    
    \centerline{
      $\Follow(\E_1 \cdot_c \E_2,f_j,k)=
        \left\{
          \begin{array}{l@{\ }l}
            (\Follow(\E_1,f_j,k)\setminus\{c\}) \cup \First(\E_2) & \text{ if }  f_j\in \Sigma^{\E_1}\\
             &\ \  \wedge c\in\Follow(\E_1,f_j,k),\\
            \Follow(\E_1,f_j,k) & \text{ if } f_j\in \Sigma^{\E_1}\\
            &\ \  \wedge c\notin\Follow(\E_1,f_j,k),\\
            \Follow(\E_2,f_j,k) & \text{ if } f_j\in \Sigma^{\E_2}\\
            & \ \  \wedge c\in\Last(\E_1),\\
            \emptyset & \text{ otherwise.}
          \end{array}
        \right.$
    }
    
    \centerline{
      $\Follow(\E_1^{*_c},f_j,k)=
        \left\{
          \begin{array}{l@{\ }l}
            \Follow(\E_1,f_j,k) \cup \First(\E_1) & \text{ if } c\in\Follow(\E_1,f_j,k),\\
            \Follow(\E_1,f_j,k) & \text{ otherwise.}\\ 
          \end{array}
        \right.$
    }
  \end{lemma}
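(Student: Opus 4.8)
The plan is to establish each of the four inductive identities by \emph{structural induction} on the linear regular expression $\E$, arguing directly from the semantic definition: $\Follow(\E,f_j,k)$ is the set of positions occurring as the $k$-th child of an occurrence of the position $f_j$ in some tree $t\in\llbracket\b\E\rrbracket$. The two base cases are immediate, since $\llbracket 0\rrbracket=\emptyset$ contains no tree and $\llbracket a\rrbracket=\{a\}$ is a single constant containing no node of positive rank, so no position $f_j\in\Po{\E}{\E}_m$ with $m\geq 1$ can occur and the set is empty.

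The organising principle for the inductive step is \emph{linearity}: because every symbol of rank at least $1$ occurs at most once in $\E$, the position $f_j$ lies in the alphabet $\Sigma^{\E_l}$ of exactly one immediate subexpression, and hence every subtree of a tree in $\llbracket\b\E\rrbracket$ that is rooted at $f_j$ must be contributed by that single subexpression. This at once settles the symbol and sum cases. For $g_i(\E_1,\ldots,\E_m)$, when $g_i=f_j$ the sole occurrence of $f_j$ is at the root and its $k$-th child is $\rooot(t_k)$ for $t_k\in\llbracket\E_k\rrbracket$, which ranges over $\First(\E_k)$ by Lemma~\ref{firstComput}; when $f_j$ lies strictly inside some $\E_l$ the set is inherited from $\E_l$; otherwise it is empty. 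For $\E_1+\E_2$ the language is $\llbracket\E_1\rrbracket\cup\llbracket\E_2\rrbracket$ and linearity confines $f_j$ to a single branch, so the set is read off from that branch.

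The main work, and the chief obstacle, is the $c$-product $\E_1\cdot_c\E_2$, where one must follow how the substitution $t\{c\leftarrow\llbracket\E_2\rrbracket\}$ reshapes the children of $f_j$. If $f_j\in\Sigma^{\E_1}$, the occurrence of $f_j$ comes from the $\E_1$-factor and its candidate $k$-th children are exactly the elements of $\Follow(\E_1,f_j,k)$; since the substitution replaces \emph{every} leaf labelled $c$ by an $\E_2$-tree, a child equal to $c$ is removed and the roots $\First(\E_2)$ are adjoined in its place, whereas all other children persist unchanged, giving the two listed outcomes according as $c\in\Follow(\E_1,f_j,k)$ or not. If instead $f_j\in\Sigma^{\E_2}$, an occurrence of $f_j$ can appear only inside an inserted copy of an $\E_2$-tree, which happens precisely when some $\E_1$-tree carries a $c$-leaf, i.e.\ $c\in\Last(\E_1)$; in that event the internal follow-structure of $\E_2$ is transported verbatim (the inserted copy undergoes no further $c$-substitution), so the set equals $\Follow(\E_2,f_j,k)$, and in every remaining configuration no occurrence of $f_j$ survives and the set is empty.

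The closure $\E_1^{*_c}$ is handled by the same substitution analysis applied to the iterated product $\bigcup_{n\geq 0}\llbracket\E_1\rrbracket^{n_c}$. Again $f_j$ must lie in $\Sigma^{\E_1}$ and its $k$-th child is drawn from $\Follow(\E_1,f_j,k)$; the one new feature is that a child equal to $c$ may now be substituted by a fresh $\E_1$-copy during the iteration, contributing $\First(\E_1^{*_c})=\First(\E_1)$, but---unlike the single product---that copy of $c$ may equally be left unsubstituted (the term $n=0$), so $c$ is \emph{not} removed. This yields $\Follow(\E_1,f_j,k)\cup\First(\E_1)$ when $c\in\Follow(\E_1,f_j,k)$ and $\Follow(\E_1,f_j,k)$ otherwise. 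Throughout, the induction hypothesis furnishes the values of $\Follow$, $\First$, and $\Last$ on the strictly smaller subexpressions, which closes the argument.
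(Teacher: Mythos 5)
The paper does not actually prove Lemma~\ref{Computfollow}: it is recalled from the reference cited as \cite{arxiv} and stated without argument, so there is no in-paper proof to measure your attempt against. Your structural induction over the semantics is the natural route, and every case checks out: linearity confines $f_j$ to a single immediate subexpression (settling the symbol and sum cases); in the $c$-product the substitution $t\lbrace c\leftarrow\llbracket\E_2\rrbracket\rbrace$ deletes $c$ from the possible $k$-children of $f_j$ and adjoins $\First(\E_2)$ (with $c$ reappearing through $\First(\E_2)$ exactly when $c\in\llbracket\E_2\rrbracket$), occurrences of $f_j$ coming from $\E_2$ exist precisely when $c\in\Last(\E_1)$ and carry their follow structure over verbatim since inserted copies undergo no further substitution; and in the closure it is indeed the $n=0$ term of the iterated $c$-product that prevents $c$ from being deleted. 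The one point you leave implicit --- as does the recalled statement itself --- is that several clauses (e.g. $\First(\E_k)$ in the symbol case, the survival of $\Follow(\E_1,f_j,k)\setminus\{c\}$ and the adjunction of $\First(\E_2)$ in the product case) silently assume that no proper subexpression denotes the empty language; if some $\llbracket\E_i\rrbracket=\emptyset$ the right-hand sides can strictly contain the left-hand sides, since the semantic definition quantifies over trees of the \emph{whole} language. This is a defect inherited from the source rather than a flaw in your argument, but one sentence fixing the convention (the expression is reduced, or $0$ occurs only as the entire expression) would make the induction airtight.
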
 
The main idea of our algorithm consists of the separation of the computation of the function $\First$ (resp. $\Follow$) to the computation of two subsets $\Firs$ (resp. $\last$) and $\Fir$ (resp. $\Follw$) that are respectively  
the projection of the set $\First$ (resp. $\Follow$) to the positions associated with
symbols of a rank $0$ and a rank greater than $0$.

Thus the computation of the set $\first{\E}$ can be written as follows:  
 $$\first{\E}=\firs{\E}\uplus\firstt{\E}.$$

\begin{proposition}
 \label{s}
Let $\E$ be a linear regular expression and $\h$ be a subexpression of $\E$. The set of symbols $\firs{\h}$ is defined as follows:
\begin{eqnarray*}
\firs{f_j(\E_1, \cdots,\E_m)}&=&\emptyset,\\
\firs{0}=\emptyset,\ \firs{a}&=& \{a\},\\
\firs{\E_1+\E_2}&=&\firs{\E_1}\cup\firs{\E_2},\\
\firs{{\E_1}\cdot_c \E_2}&=&\left\{
\begin{array}{ll}
(\firs{\E_1}\setminus\{c\})\cup\firs{\E_2}&\mbox{ if  }c\in\llbracket\E_1\rrbracket,\\
\firs{\E_1}&\mbox{ otherwise.}
\end{array}\right.\\
\firs{{\E_1}^{*_c}}&=&\firs{\E_1}.
\end{eqnarray*}
\end{proposition}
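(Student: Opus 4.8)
The plan is to treat $\firs{\h}$ by its intended meaning, namely as the projection $\firs{\h}=\first{\h}\cap\Po{\E}{\E}_0$ of the set $\first{\h}$ onto the rank-$0$ positions, and to derive the stated recurrences from the inductive characterization of $\first{\h}$ supplied by Lemma~\ref{firstComput}. Since that lemma already expresses $\first{\h}$ for every shape of $\h$, it suffices to intersect each of its clauses with $\Po{\E}{\E}_0$ and to simplify, rewriting $\first{\E_i}\cap\Po{\E}{\E}_0$ as $\firs{\E_i}$ for each subexpression $\E_i$. No separate structural induction is then needed beyond the one already carried out in Lemma~\ref{firstComput}.

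First I would record two elementary facts about the projection. Intersection with $\Po{\E}{\E}_0$ distributes over union, which immediately handles the sum case $\first{\E_1+\E_2}=\first{\E_1}\cup\first{\E_2}$ and, together with $\first{{\E_1}^{*_c}}=\first{\E_1}$, the star case. Second, by construction the rank-$0$ positions coincide with the constants, that is $\Po{\E}{\E}_0=\Sigma_0$; hence a position $f_j$ produced by $\first{f_j(\E_1,\dots,\E_m)}=\{f_j\}$ has rank $m\geq 1$ and therefore lies outside $\Po{\E}{\E}_0$, giving $\firs{f_j(\E_1,\dots,\E_m)}=\{f_j\}\cap\Po{\E}{\E}_0=\emptyset$, while the base cases $\firs{0}=\emptyset$ and $\firs{a}=\{a\}$ follow at once because $a\in\Sigma_0$.

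The only clause requiring genuine care is the $c$-product $\E_1\cdot_c\E_2$, and this is where I expect the single nontrivial step to sit. When $c\in\llbracket\E_1\rrbracket$, Lemma~\ref{firstComput} gives $\first{\E_1\cdot_c\E_2}=(\first{\E_1}\setminus\{c\})\cup\first{\E_2}$, so after intersecting with $\Po{\E}{\E}_0$ I must commute the deletion $\setminus\{c\}$ past the projection. This is legitimate precisely because $c\in\Sigma_0=\Po{\E}{\E}_0$, whence $(\first{\E_1}\setminus\{c\})\cap\Po{\E}{\E}_0=(\first{\E_1}\cap\Po{\E}{\E}_0)\setminus\{c\}=\firs{\E_1}\setminus\{c\}$; distributing the intersection over the remaining union then yields $(\firs{\E_1}\setminus\{c\})\cup\firs{\E_2}$. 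The complementary case $c\notin\llbracket\E_1\rrbracket$ reduces to $\first{\E_1\cdot_c\E_2}=\first{\E_1}$ and hence to $\firs{\E_1}$ directly. Since the guard $c\in\llbracket\E_1\rrbracket$ is identical in both the lemma and the proposition, no reformulation of the side condition is required, and the argument closes.
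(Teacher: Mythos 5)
Your proposal is correct and follows essentially the same route as the paper: the paper likewise takes $\firs{\cdot}$ to be the intersection of $\First(\cdot)$ with $\Sigma_0$, expands each clause of Lemma~\ref{firstComput} (writing $\First(\E_i)=\firstt{\E_i}\uplus\firs{\E_i}$), and simplifies, treating only the base cases and the $\cdot_c$ case in detail. Your explicit justification that $\setminus\{c\}$ commutes with the projection because $c\in\Sigma_0$ is a step the paper leaves implicit, so nothing is missing.
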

\begin{proof}
\begin{sloppy}  
  
Let $\E$ be a linear regular expression, $1\leq k\leq m$ be two integers and $f_j$ be a symbol in $\Sigma^{\E}_m$.
\begin{enumerate}
\item If $\E=0$ or if $\E=f_j(\E_1,\ldots,\E_m)$, then $\firs{\E}=\emptyset$ and for $\E=a$, $\firs{\E}=\{a\}$.  

\item Let us prove this proposition for the case $\E=\E_1\cdot_c \E_2$. 

We have $\firs{\E_1\cdot_c \E_2}=\First(\E_1\cdot_c \E_2,f_j,k)\cap{\Sigma}_0$
 \begin{align*} 
\firs{\E_1\cdot_c \E_2}&=\First(\E_1\cdot_c \E_2)\cap{\Sigma}_0\\ 
&=\left\{
          \begin{array}{l@{\ }l}
            ((\First(\E_1)\setminus\{c\})\cup \First(\E_2))\cap{\Sigma}_0 & \mbox{ if } c \in\llbracket{\E_1}\rrbracket,\\
             \ \First(\E_1)\cap{\Sigma}_0 & \text{ otherwise.}\\ 
          \end{array}
        \right.\\
&=\left\{
    \begin{array}{l@{\ }l}
((\firstt{\E_1}\uplus \firs{\E_1}\setminus\{c\}))\cup (\firstt{\E_2}\uplus\firs{\E_2})~)\cap{\Sigma}_0 & \mbox{ if } c \in\llbracket{\E_1}\rrbracket,\\
     (\firstt{\E_1}\uplus \firs{\E_1})\cap{\Sigma}_0 & \text{ otherwise.}\\ 
          \end{array}
        \right.\\
&=\left\{
    \begin{array}{l@{\ }l}
(\firs{\E_1}\setminus\{c\})\cup \firs{\E_2}~) & \mbox{ if } c \in\llbracket{\E_1}\rrbracket,\\
    \ \firs{\E_2} & \text{ otherwise.}\\ 
          \end{array}
        \right.\\
\end{align*}
\end{enumerate}
\end{sloppy}  
  \qed
\end{proof} 
The following proposition shows that $\firstt{\E}$ can be computed in a similar way to the case of words. 
\begin{proposition}
\label{prop firstsup}
Let $\E$ be a linear regular expression and $\h$ be a subexpression of $\E$. The set of symbols $\firstt{\h}$ is defined as: 
\begin{eqnarray*}
\firstt{a}=\firstt{0}&=& \emptyset,\\
\firstt{f_j(\E_1, \cdots,\E_m)}&=&\lbrace f_j\rbrace,\\
\firstt{\E_1+\E_2}&=&\firstt{\E_1}\uplus\firstt{\E_2},\\
\firstt{{\E_1}\cdot_c \E_2}&=&\left\{
\begin{array}{ll}
\firstt{\E_1}\uplus\firstt{\E_2}&\mbox{ if } c\in\llbracket\E_1\rrbracket,\\
\firstt{\E_1}&\mbox{ otherwise.}
\end{array}\right.\\
\firstt{{\E_1}^{*_c}}&=&\firstt{\E_1},\\
\end{eqnarray*}
\end{proposition}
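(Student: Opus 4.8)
The plan is to derive every clause from the single defining identity $\firstt{\E}=\first{\E}\cap\Sigma_>$ together with the inductive description of $\First$ supplied by Lemma~\ref{firstComput}, proceeding by structural induction on $\E$ exactly as in the proof of Proposition~\ref{s} for the rank-$0$ projection. Concretely, for each syntactic form of $\E$ I would write $\firstt{\E}=\first{\E}\cap\Sigma_>$, substitute the matching right-hand side from Lemma~\ref{firstComput}, distribute the intersection over the unions and set differences that appear there, and then simplify using the inductive hypotheses $\firstt{\E_i}=\first{\E_i}\cap\Sigma_>$.

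For the base cases this is immediate: $\first{0}=\emptyset$ and $\first{a}=\{a\}$ with $a\in\Sigma_0$ both meet $\Sigma_>$ trivially, giving $\emptyset$, while $\first{f_j(\E_1,\dots,\E_m)}=\{f_j\}$ with $f_j$ a position of rank $m\geq 1$, so the intersection returns $\{f_j\}$. The star case is equally direct since $\first{\E_1^{*_c}}=\first{\E_1}$, so the intersection with $\Sigma_>$ commutes straight through. The only computation worth spelling out is the concatenation case. When $c\in\llbracket\E_1\rrbracket$ we have $\first{\E_1\cdot_c\E_2}=(\first{\E_1}\setminus\{c\})\cup\first{\E_2}$; the key remark is that $c\in\Sigma_0$, so deleting $c$ is invisible to the intersection with $\Sigma_>$, i.e. $(\first{\E_1}\setminus\{c\})\cap\Sigma_>=\first{\E_1}\cap\Sigma_>=\firstt{\E_1}$, and the branch collapses to $\firstt{\E_1}\cup\firstt{\E_2}$. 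When $c\notin\llbracket\E_1\rrbracket$ we have $\first{\E_1\cdot_c\E_2}=\first{\E_1}$, which meets $\Sigma_>$ to give $\firstt{\E_1}$.

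The one point that genuinely requires an argument --- and the only place where this proposition diverges from Proposition~\ref{s} --- is the replacement of $\cup$ by the disjoint union $\uplus$ in the sum case and in the first concatenation branch. This is exactly where the hypothesis that $\E$ is linear is used: every symbol of rank at least $1$ occurs at most once in $\E$, so no position of $\Po{\E}{\E}_>$ can lie simultaneously in $\Sigma^{\E_1}$ and in $\Sigma^{\E_2}$, whence $\firstt{\E_1}$ and $\firstt{\E_2}$ are disjoint and their union is genuinely a disjoint one. I would isolate this disjointness as a one-line observation and then invoke it in both cases; note that it has no counterpart in Proposition~\ref{s}, where the constants of $\Sigma_0$ may repeat and only an ordinary union is available. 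Beyond this remark the argument is routine set-algebra bookkeeping mirroring the preceding proof, so I expect no substantive obstacle.
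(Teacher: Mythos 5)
Your proof takes essentially the same route as the paper's: both express $\Fir{(\h)}$ as $\First{(\h)}\cap\Sigma_>$, substitute the inductive formulas of Lemma~\ref{firstComput}, decompose $\First$ into its rank-$0$ and rank-$>0$ parts, and let the intersection with $\Sigma_>$ absorb the deletion of the constant $c$ in the concatenation case. Your explicit justification of the disjointness behind $\uplus$ via linearity is a small addition the paper leaves implicit, but it does not change the argument.
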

\begin{proof}
\begin{sloppy}  
  
Let $\E$ be a linear regular expression. 
\begin{enumerate}
\item If $\E=0$ or if $\E=f_j(\E_1,\ldots,\E_m)$, then $\firs{\E}=\emptyset$ and for $\E=a$, $\firs{\E}=\{a\}$.  

\item Let us prove this proposition for the cases $\E=\E_1\cdot_c \E_2$. 

We have $\firs{\E_1\cdot_c \E_2}=\First(\E_1\cdot_c \E_2,f_j,k)\cap{\Sigma}_{>} $
 \begin{align*} 
\firstt{\E_1\cdot_c \E_2}&=\First(\E_1\cdot_c \E_2)\cap{\Sigma}_{>} \\ 
&=\left\{
          \begin{array}{l@{\ }l}
            (\First(\E_1)\setminus\{c\}\cup \First(\E_2))\cap{\Sigma}_{>}  & \mbox{ if } c \in\llbracket{\E_1}\rrbracket,\\
             \ \First(\E_1)\cap\Sigma_{>} & \text{ otherwise.}\\ 
          \end{array}
        \right.\\
&=\left\{
    \begin{array}{l@{\ }l}
(~((\firstt{\E_1}\uplus \firs{\E_1})\setminus\{c\})\cup (\firstt{\E_2}\uplus\firs{\E_2})~)\cap{\Sigma}_{>}  & \mbox{ if } c \in\llbracket{\E_1}\rrbracket,\\
     (\firstt{\E_1}\uplus \firs{\E_1})\cap{\Sigma}_{>}  & \text{ otherwise.}\\ 
          \end{array}
        \right.\\
        &=\left\{
    \begin{array}{l@{\ }l}
(~((\firs{\E_1}\setminus\{c\})\uplus \firstt{\E_1})\cup (\firstt{\E_2}\uplus\firs{\E_2})~)\cap{\Sigma}_{>}  & \mbox{ if } c \in\llbracket{\E_1}\rrbracket,\\
     (\firstt{\E_1}\uplus \firs{\E_1})\cap{\Sigma}_{>} & \text{ otherwise.}\\ 
          \end{array}
        \right.\\
&=\left\{
    \begin{array}{l@{\ }l}
\firstt{\E_1}\cup \firstt{\E_2} & \mbox{ if } c \in\llbracket{\E_1}\rrbracket,\\
    \ \firstt{\E_2} & \text{ otherwise.}\\ 
          \end{array}
        \right.\\
\end{align*}
\end{enumerate}
\end{sloppy}  
\qed
\end{proof}

Let us recall that $\las{\E}{f_j}{k}$ and $\Fw{\E}{f_j}{k}$ are, respectively, the projection of the set $\Fll{\E}{f_j}{k}$ to the symbols associated with symbols of a rank $0$ and a rank greater than $0$. We have: 
\begin{eqnarray*}
\Fll{\E}{f_j}{k}=\las{\E}{f_j}{k}\uplus\Fw{\E}{f_j}{k}
\end{eqnarray*}
\begin{proposition}\label{a}
Let $\E$ be a linear regular expression, $1\leq k\leq m$ be two integers and $f_j$ be a symbol in $\Sigma^{\E}_m$. 
The function $\las{\E}{f_j}{k}$ can be computed inductively as follows: 
\begin{eqnarray*}
\las{a}{f_j}{k}&=& \las{0}{f_j}{k}=\emptyset,\\
\las{g_i(\E_1, \cdots,\E_m)}{f_j}{k}&=&\left\{
\begin{array}{lll}
\displaystyle\firs{\E_k}& &\;\;\;\;\;\;\;\;\;\;\;\;\mbox{ if } g_i=f_j, \\ 
\las{\E_l}{f_j}{k}& &\;\;\;\;\;\;\;\;\;\;\;\;\mbox{ if } f_j\in \Sigma^{\E_l},
\end{array}\right.\\
\las{\E_1+\E_1}{f_j}{k}&=&\left\{
\begin{array}{ll}
\las{\E_1}{f_j}{k} &\;\;\;\;\;\;\;\;\;\;\;\;\;\;\;\;\mbox{ if }f_j\in \Sigma^{\E_1}, \\ 
\las{\E_2}{f_j}{k} &\;\;\;\;\;\;\;\;\;\;\;\;\;\;\;\;\mbox{ if } f_j\in \Sigma^{\E_2},
\end{array}\right.\\
\las{\E_1\cdot_c\E_1}{f_j}{k}&=&\left\{
\begin{array}{ll}
(\las{\E_1}{f_j}{k}\setminus\{c\})\cup\firs{\E_2}& \mbox{ if }f_j\in \Sigma^{\E_1}\\ 
&\mbox{ and } c\in \las{\E_1}{f_j}{k},\\ 
\las{\E_1}{f_j}{k}&\mbox{ if } f_j\in \Sigma^{\E_1}\\
&\mbox{ and }  c\notin \las{\E_1}{f_j}{k},\\
\las{\E_2}{f_j}{k}&\mbox{ if } f_j\in \Sigma^{\E_2}\\
&\mbox{ and } c\in  \Last(\E_1),\\
\emptyset&\mbox{ otherwise.}
\end{array}\right.\\
 \las{{\E_1}^{*_c}}{f_j}{k}&=&\left\{
\begin{array}{lll}
(\las{\E_1}{f_j}{k}\setminus\{c\})\cup\firs{{\E_1}}&\mbox{ if } c\in \las{\E_1}{f_j}{k},\\
\las{\E_1}{f_j}{k} &\mbox{ otherwise.}
\end{array}\right.
\end{eqnarray*}
\end{proposition}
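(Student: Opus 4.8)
The plan is to read off each clause directly from the inductive description of $\Follow$ in Lemma~\ref{Computfollow}, using the defining identity $\las{\E}{f_j}{k}=\Fll{\E}{f_j}{k}\cap\Sigma_0$ together with the decompositions $\First(\E)=\firs{\E}\uplus\firstt{\E}$ and $\Fll{\E}{f_j}{k}=\las{\E}{f_j}{k}\uplus\Fw{\E}{f_j}{k}$. The argument mirrors the proofs of Propositions~\ref{s} and~\ref{prop firstsup}: for each syntactic form of $\E$ I would take the matching clause of Lemma~\ref{Computfollow}, intersect it with $\Sigma_0$, and distribute the intersection over $\cup$ and $\setminus$. Since $\firstt{\E'}$ and $\Fw{\E'}{f_j}{k}$ contain only symbols of rank $>0$, they are annihilated by $\cap\,\Sigma_0$; consequently every $\First(\E')$ collapses to $\firs{\E'}$ and every $\Fll{\E'}{f_j}{k}$ collapses to $\las{\E'}{f_j}{k}$, which is the mechanism that turns the $\Follow$ recursion into the stated $\las$ recursion.

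First I would dispatch the base cases $\E=0$ and $\E=a$ (with $a\in\Sigma_0$), the function-symbol case $\E=g_i(\E_1,\dots,\E_m)$, and the sum $\E=\E_1+\E_2$. In each of these the relevant $\Follow$ clause is already a single set (namely $\First(\E_k)$, a recursive $\Follow$, or $\emptyset$), so intersecting with $\Sigma_0$ immediately produces the corresponding $\firs{\E_k}$, $\las{\E_l}{f_j}{k}$, or $\emptyset$. Before treating the guards I would make one observation explicit: since $c\in\Sigma_0$, one has $c\in\Fll{\E'}{f_j}{k}\iff c\in\las{\E'}{f_j}{k}$, which lets me rewrite the side conditions of Lemma~\ref{Computfollow} that test $c\in\Follow(\E_1,f_j,k)$ as the conditions $c\in\las{\E_1}{f_j}{k}$ appearing in the statement; the condition $c\in\Last(\E_1)$ is already about rank-$0$ symbols and carries over verbatim.

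The product case $\E=\E_1\cdot_c\E_2$ is where a little care is needed. Because $c\in\Sigma_0$, deletion of $\{c\}$ commutes with the projection onto $\Sigma_0$, so $(\Fll{\E_1}{f_j}{k}\setminus\{c\})\cap\Sigma_0=\las{\E_1}{f_j}{k}\setminus\{c\}$ while $\First(\E_2)\cap\Sigma_0=\firs{\E_2}$, which yields the first branch; the remaining three branches follow by the same annihilation of the rank-$>0$ parts together with the guard rewriting noted above.

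The genuine subtlety, and the step I expect to be the main obstacle, is the closure case $\E=\E_1^{*_c}$. Intersecting the $\Follow$ clause $\Fll{\E_1}{f_j}{k}\cup\First(\E_1)$ with $\Sigma_0$ literally gives $\las{\E_1}{f_j}{k}\cup\firs{\E_1}$, which does not visibly contain the $\setminus\{c\}$ present in the statement. To reconcile the two I would invoke the standing normalisation that every subexpression $\f^{*_c}$ has been replaced by $(\f+c)^{*_c}$: under it $c\in\First(\E_1)$, hence $c\in\firs{\E_1}$, so reinserting $c$ through $\firs{\E_1}$ makes the prior deletion harmless, that is $(\las{\E_1}{f_j}{k}\setminus\{c\})\cup\firs{\E_1}=\las{\E_1}{f_j}{k}\cup\firs{\E_1}$. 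Establishing this equality is the only place where the normalisation is actually used, and stating it precisely — thereby justifying why the $\setminus\{c\}$ may appear — is the part I would be most careful about.
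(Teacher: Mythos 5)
Your proposal is correct and follows essentially the same route as the paper: project each clause of Lemma~\ref{Computfollow} onto $\Sigma_0$ via the decompositions $\First=\Firs\uplus\Fir$ and $\Follow=\last\uplus\Follw$, observe that the rank-$>0$ components are annihilated by $\cap\,\Sigma_0$, and note that the guards $c\in\Follow(\E_1,f_j,k)$ reduce to $c\in\las{\E_1}{f_j}{k}$ because $c\in\Sigma_0$. You are in fact more careful than the paper on the $\E_1^{*_c}$ clause, where the $\setminus\{c\}$ in the stated formula must be reconciled with Lemma~\ref{Computfollow} via the normalisation $\f^{*_c}\rightsquigarrow(\f+c)^{*_c}$ (which forces $c\in\firs{\E_1}$); the paper's own proof glosses over this point.
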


\begin{proof}
\begin{sloppy}  
  
Let $\E$ be a linear regular expression, $1\leq k\leq m$ be two integers and $f_j$ be a symbol in $\Sigma^{\E}_m$.  
\begin{enumerate}
\item If $\E=0$ or if $\E=a$, then $\las{\E}{f_j}{k}=\emptyset$.

Let us prove this proposition for the cases $\E=\E_1\cdot_c \E_2$ and $\E=\E_1^{*_c}$. 
\item Let us consider that $\E=\E_1\cdot_c \E_2$. 

We have $\las{\E_1\cdot_c \E_2}{f_j}{k}=\Follow(\E_1\cdot_c \E_2,f_j,k)\cap{\Sigma}_0$
\begin{align*} 
\las{\E_1\cdot_c \E_2}{f_j}{k}&=\Follow(\E_1\cdot_c \E_2,f_j,k)\cap{\Sigma}_0\\ 
&=\left\{
\begin{array}{l@{\ }l}
((\Follow(\E_1,f_j,k)\setminus\{c\})\cup\First(\E_2))\cap{\Sigma}_0 &\text{ if } f_j\in\Sigma^{\E_1}\wedge c\in\las{\E_1}{f_j}{k},\\
            \Follow(\E_1,f_j,k)\cap{\Sigma}_0 & \text{ if } f_j\in \Sigma^{\E_1}\wedge c\notin \las{\E_1}{f_j}{k},\\
            \Follow(\E_2,f_j,k)\cap{\Sigma}_0 & \text{ if } f_j\in \Sigma^{\E_2}\wedge c\in\Last(\E_1),\\
            \emptyset & \text{ otherwise.}
\end{array}
\right.\\ 
          &=\left\{
\begin{array}{l@{\ }l}
(((\Fw{\E_1}{f_j}{k}\uplus \las{\E_1}{f_j}{k})\setminus\{c\})\cup &\ \text{ if }f_j\in \Sigma^{\E_1} \wedge c\in\las{\E_1}{f_j}{k},\\ \ (\firs{\E_2}\uplus\firstt{\E_2})~)\cap{\Sigma}_0             & \\
 (\Fw{\E_1}{f_j}{k}\uplus \las{\E_1}{f_j}{k})\cap{\Sigma}_0 & \text{ if } f_j\in \Sigma^{\E_1} \wedge c\notin\las{\E_1}{f_j}{k},\\
 (\Fw{\E_2}{f_j}{k}\uplus \las{\E_2}{f_j}{k})\cap{\Sigma}_0 & \text{ if } f_j\in \Sigma^{\E_2}  \wedge c\in\Last(\E_1),\\
 \emptyset & \text{ otherwise.}
\end{array}
\right.\\ 
&=\left\{
\begin{array}{l@{\ }l}
 \las{\E_1}{f_j}{k}\cup \firs{\E_2} & \mbox{ if } f_j\in \Sigma^{\E_1} \mbox{ and }c\in\las{\E_1}{f_j}{k},\\
  \las{\E_1}{f_j}{k} & \mbox{ if } f_j\in \Sigma^{\E_1} \mbox{ and }c\notin\las{\E_1}{f_j}{k},\\ 
  \las{\E_2}{f_j}{k} & \mbox{ if } f_j\in \Sigma^{\E_2} \mbox{ and }c\in\Last(\E_1),\\ 
  \emptyset & \mbox{ otherwise.}\\ 
\end{array}
\right.
\end{align*}
\item Let us consider that $\E=\E_1^{*_c}$. By definition we have $\las{\E_1^{*_c}}{f_j}{k}=\Follow(\E_1^{*_c},f_j,k)\cap {\Sigma}_0$. Then: 
 \begin{align*} 
\las{\E_1^{*_c}}{f_j}{k}&=\Follow(\E_1^{*_c},f_j,k)\cap{\Sigma}_0\\     
&=\left\{
          \begin{array}{l@{\ }l}
((\Follow(\E_1,f_j,k)\setminus\{c\})\cup \First(\E_1))\cap{\Sigma}_0 & \text{ if }c\in\las{\E_1}{f_j}{k},\\
             \ (\Follow(\E_1,f_j,k)\cap{\Sigma}_0 & \text{ otherwise.}\\ 
          \end{array}
        \right.\\
        &=\left\{
          \begin{array}{l@{\ }l}
            (~(\Fw{\E_1}{f_j}{k}\uplus \las{\E_1}{f_j}{k})\setminus\{c\})\cup & \text{ if }c\in\las{\E_1}{f_j}{k},\\
            \  \ \ (\firs{\E_1}\uplus\firstt{\E_1})~)\cap{\Sigma}_0             & \\
             \ (\Fw{\E_1}{f_j}{k}\uplus \las{\E_1}{f_j}{k})\cap{\Sigma}_0 & \text{ otherwise.}\\ 
          \end{array}
        \right.\\
         &=\left\{
          \begin{array}{l@{\ }l}
  \las{\E_1}{f_j}{k}\cup \firs{\E_1} & \mbox{ if } c\in\las{\E_1}{f_j}{k}),\\
   \las{\E_1}{f_j}{k} & \mbox{ otherwise.}\\ 
          \end{array}
        \right.
\end{align*}
\end{enumerate}
\end{sloppy}   
\qed
\end{proof}

\begin{proposition}
\label{x}
Let $\E$ be a linear regular expression, $1\leq k\leq m$ be two integers and $f_j$ be a symbol in $\Sigma^{\E}_m$.  
We define inductively the set $\Fw{\E}{f_j}{k}$ as follows:
\begin{eqnarray*}
\Fw{a}{f_j}{k}&=&\Fw{0}{f_j}{k}=\emptyset, \\
 \Fw{g_i(\E_1, \dots,\E_m)}{f_j}{k}&=&
 \left\{
\begin{array}{ll}
\displaystyle\firstt{\E_k}&\;\;\;\;\;\;\;\;\;\;\;\;\;\;\;\mbox{ if } g_i=f_j,\\ 
\Fw{\E_l}{f_j}{k}&\;\;\;\;\;\;\;\;\;\;\;\;\;\;\;\mbox{ if } f_j\in \Sigma^{\E_l}.
\end{array}\right.\\
\Fw{\f+\G}{f_j}{k}&=&\left\{
\begin{array}{ll}
\Fw{\f}{f_j}{k} &\;\;\;\;\;\;\;\;\;\;\;\;\;\;\;\;\mbox{ if }f_j\in \Sigma^{\f}, \\ 
\Fw{\G}{f_j}{k}&\;\;\;\;\;\;\;\;\;\;\;\;\;\;\;\;\mbox{ if } f_j\in \Sigma^{\G},
\end{array}\right.\\
\Fw{\f\cdot_c \G}{f_j}{k}&=&\left\{
\begin{array}{ll}
\Fw{\f}{f_j}{k}\cup\firstt{\G}&\mbox{ if }c\in \las{\f}{f_j}{k},\\
\Fw{\f}{f_j}{k}&\mbox{ if } f_j\in \Sigma^{\f},\\
&\mbox{ and } c\notin \las{\f}{f_j}{k}, \\
\Fw{\G}{f_j}{k}&\mbox{ if } f_j\in \Sigma^{\G}\\
&\mbox{ and } c\in  \Last(\f),\\
\emptyset&\mbox{ otherwise.}
\end{array}\right.\\
\Fw{{\f}^{*_c}}{f_j}{k}&=&\left\{
\begin{array}{ll}
\Fw{\f}{f_j}{k}\cup\firstt{\f}&\mbox{ if }c\in \las{\f}{f_j}{k},\\
\Fw{\f}{f_j}{k}&\mbox{ otherwise.}
\end{array}\right.\\
\end{eqnarray*}
\end{proposition}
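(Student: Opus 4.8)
The plan is to proceed by structural induction on $\E$, exactly mirroring the proof of Proposition~\ref{a} but projecting onto $\Sigma_{>}$ instead of $\Sigma_{0}$. The starting point is the definitional identity $\Fw{\E}{f_j}{k}=\Follow(\E,f_j,k)\cap\Sigma_{>}$ together with the inductive description of $\Follow$ supplied by Lemma~\ref{Computfollow}. For each syntactic case I would take the corresponding clause of that lemma and intersect it with $\Sigma_{>}$, then distribute the intersection over the union and set-difference operations so that the outcome can be re-expressed in terms of the already-projected quantities $\Fw{\cdot}{f_j}{k}$ and $\firstt{\cdot}$.

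The base cases $\E=0$ and $\E=a$ are immediate, since $\Follow$ is empty there. The cases $\E=g_i(\E_1,\dots,\E_m)$ and $\E=\f+\G$ require only that $\firstt{\E_k}=\First(\E_k)\cap\Sigma_{>}$ and that $\Follow(\E_l,f_j,k)\cap\Sigma_{>}=\Fw{\E_l}{f_j}{k}$, after which intersection commutes with the case split trivially. The substance lies in the cases $\E=\f\cdot_c\G$ and $\E=\f^{*_c}$, whose clauses in Lemma~\ref{Computfollow} contain a set-difference $\setminus\{c\}$ and a union with a $\First$ set. Here I would use two elementary facts repeatedly: first, that $c\in\Sigma_{0}$, so that $(X\setminus\{c\})\cap\Sigma_{>}=X\cap\Sigma_{>}$ for any $X$, which makes the $\setminus\{c\}$ disappear under the projection; and second, that $\First(\f)\cap\Sigma_{>}=\firstt{\f}$ and $\First(\G)\cap\Sigma_{>}=\firstt{\G}$, so that the contributed $\First$ sets become $\firstt{}$ sets. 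Applying these to the first clause of the product gives $((\Follow(\f,f_j,k)\setminus\{c\})\cup\First(\G))\cap\Sigma_{>}=\Fw{\f}{f_j}{k}\cup\firstt{\G}$, and the remaining clauses follow the same pattern; the star case is analogous with $\G$ replaced by $\f$.

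The only genuinely delicate point, and the step I expect to be the main obstacle, is the rewriting of the guards. Lemma~\ref{Computfollow} phrases the conditions of the product and star cases using $c\in\Follow(\f,f_j,k)$, whereas the proposition states them using $c\in\las{\f}{f_j}{k}$. Since $c\in\Sigma_{0}$, membership $c\in\Follow(\f,f_j,k)$ is equivalent to $c\in\Follow(\f,f_j,k)\cap\Sigma_{0}=\las{\f}{f_j}{k}$, so the two formulations coincide. To justify dropping the redundant conjunct $f_j\in\Sigma^{\f}$ present in the first product clause of Lemma~\ref{Computfollow}, I would invoke the fact that $\Follow(\f,f_j,k)=\emptyset$ whenever $f_j\notin\Sigma^{\f}$ (an immediate consequence of that lemma), so that $c\in\las{\f}{f_j}{k}$ already entails $f_j\in\Sigma^{\f}$. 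With these guard equivalences in hand, the four clauses of the product case and the two clauses of the star case match the stated formulas verbatim, completing the induction.
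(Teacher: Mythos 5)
Your proposal is correct and follows essentially the same route as the paper's own proof: both start from $\Fw{\E}{f_j}{k}=\Follow(\E,f_j,k)\cap\Sigma_{>}$, expand via the inductive clauses for $\Follow$, and use the decompositions $\Follow=\Follw\uplus\last$ and $\First=\Fir\uplus\Firs$ to push the intersection with $\Sigma_{>}$ through, with the product and star cases carrying the substance. Your explicit justification of the guard rewriting from $c\in\Follow(\f,f_j,k)$ to $c\in\las{\f}{f_j}{k}$ is a point the paper passes over silently, but it does not change the argument.
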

\begin{proof}
\begin{sloppy}  
  
Let $\E$ be a linear regular expression, $1\leq k\leq m$ be two integers and $f_j$ be a symbol in $\Sigma^{\E}_m$. 
\begin{enumerate}
\item If $\E=0$ or if $\E=a$, then $\Fw{\E}{f_j}{k}=\emptyset$.\\
Let us prove this proposition for the cases $\E=\E_1\cdot_c \E_2$ and $\E=\E_1^{*_c}$. 
\item Let us consider that $\E=\E_1\cdot_c \E_2$. 

We have $\Fw{\E_1\cdot_c \E_2}{f_j}{k}=\Follow(\E_1\cdot_c \E_2,f_j,k)\cap\Sigma_{>}$
\begin{align*} 
\Fw{\E_1\cdot_c \E_2}{f_j}{k}&=\Follow(\E_1\cdot_c \E_2,f_j,k)\cap\Sigma_{>}\\ 
&=\left\{
		  \begin{array}{l@{\ }l}
            (~(\Follow(\E_1,f_j,k)\setminus\{c\}) \cup \First(\E_2)~)\cap\Sigma_{>} & \text{ if }  f_j\in \Sigma^{\E_1} \ \wedge \\
             &\ \ c\in\las{\E_1}{f_j}{k},\\
            \Follow(\E_1,f_j,k)\cap\Sigma_{>} & \text{ if } f_j\in \Sigma^{\E_1}  \  \wedge \\
            &\ c\notin\las{\E_1}{f_j}{k},\\
            \Follow(\E_2,f_j,k)\cap\Sigma_{>} & \text{ if } f_j\in \Sigma^{\E_2} \  \wedge c\in\Last(\E_1),\\
            \emptyset & \text{ otherwise.}
          \end{array}
          \right.\\ 
          &=\left\{
		  \begin{array}{l@{\ }l}
(~((\Fw{\E_1}{f_j}{k}\uplus \las{\E_1}{f_j}{k})\setminus\{c\})\cup&\text{ if }f_j\in\Sigma^{\E_1}\wedge c\in\las{\E_1}{f_j}{k},\\     
\ (\firs{\E_2}\uplus\firstt{\E_2})~)\cap\Sigma_{>} \\	& \\
(\Fw{\E_1}{f_j}{k}\uplus \las{\E_1}{f_j}{k})\cap\Sigma_{>} & \text{ if } f_j\in \Sigma^{\E_1}\ \wedge c\notin \las{\E_1}{f_j}{k},\\
            & \\
            (\Fw{\E_2}{f_j}{k}\uplus \las{\E_2}{f_j}{k})\cap\Sigma_{>} & \text{ if } f_j\in \Sigma^{\E_2}  \wedge c\in\Last(\E_1),\\
            \emptyset & \text{ otherwise.}
          \end{array}
          \right.\\ 
 &=\left\{
          \begin{array}{l@{\ }l}
  \Fw{\E_1}{f_j}{k}\cup \firstt{\E_2} & \mbox{ if } f_j \in\Sigma^{\E_1} \mbox{ and }c\in\las{\E_1}{f_j}{k}),\\
  \Fw{\E_1}{f_j}{k} & \mbox{ if } f_j \in\Sigma^{\E_1} \mbox{ and }c\notin\las{\E_1}{f_j}{k}),\\ 
  \Fw{\E_2}{f_j}{k} & \mbox{ if } f_j \in\Sigma^{\E_2} \mbox{ and }c\in\Last(\E_1),\\ 
  \emptyset & \mbox{ otherwise.}\\ 
          \end{array}
        \right.
\end{align*}
\item Let us consider that $\E=\E_1^{*_c}$. By definition we have $\Fw{\E_1^{*_c}}{f_j}{k}=\Follow(\E_1^{*_c},f_j,k)\cap \Sigma_{>}$. Then: 
 \begin{align*} 
\Fw{\E_1^{*_c}}{f_j}{k}&=\Follow(\E_1^{*_c},f_j,k)\cap\Sigma_{>}\\       
&=\left\{
          \begin{array}{l@{\ }l}
((\Follow(\E_1,f_j,k)\setminus\{c\})\cup \First(\E_1))\cap\Sigma_{>} & \text{ if } c\in\las{\E_1}{f_j}{k},\\
             \ (\Follow(\E_1,f_j,k)\cap\Sigma_{>} & \text{ otherwise.}\\ 
          \end{array}
        \right.\\
        &=\left\{
          \begin{array}{l@{\ }l}
            ((\Fw{\E_1}{f_j}{k}\uplus \las{\E_1}{f_j}{k})\setminus\{c\})\cup & \text{ if } c\in\las{\E_1}{f_j}{k},\\
            \  \ \ (\firs{\E_1}\uplus\firstt{\E_1}))\cap \Sigma_{>} \\            & \\
             \ (\Fw{\E_1}{f_j}{k}\uplus \las{\E_1}{f_j}{k})\cap\Sigma_{>} & \text{ otherwise.}\\ 
          \end{array}
        \right.\\
         &=\left\{
          \begin{array}{l@{\ }l}
   \Fw{\E_1}{f_j}{k}\cup \firstt{\E_1} & \mbox{ if } c\in\las{\E_1}{f_j}{k}),\\
   \Fw{\E_1}{f_j}{k} & \mbox{ otherwise.}\\ 
          \end{array}
        \right.
\end{align*}
\end{enumerate}
\end{sloppy}    
\qed
\end{proof}
\begin{remark}\label{remark}
The definition of the set $\Fw{\E}{f_j}{k}$ is identical to the function $\Follow$ in the case of words~\cite{ZPC}. We have the same formulas.
\end{remark}

The construction of the $k$-position tree automaton ${\cal P}_{\E}$ from the regular expression  as it has been presented in this article complies with the properties of the position automaton proposed by Glushkov. 
This is the generalization of the position automaton from words to trees.

\subsection{$\mathrm{ZPC}$-Structure for $\Follow$ Computation}
In the word case, the construction of the position automaton, has been developed in~\cite{ZPC96,ZPC}. This construction will be extended to trees in the following.
 
Let $T_{\E}$ be the syntax tree associated with the regular expression  $\E$. 

The set of nodes of $T_{\E}$ is written as $\mathrm{Nodes}(\E)$. For a node $\nu$ in $\mathrm{Nodes}(\E)$, $\mathrm{sym}(\nu)$, $\mathrm{father}(\nu)$, $\mathrm{son}(\nu)$, $\mathrm{right}(\nu)$ and $\mathrm{left}(\nu)$ denote respectively the symbol, the father, the son, the right son and the left son of the node $\nu$ if they exist. 

We denote by $\E_{\nu}$ the subexpression rooted at $\nu$; In this case we write $\nu_{\E}$ to denote the node associated to $\E_{\nu}$. Let $\gamma:~\mathrm{Nodes}(\E)\cup\{\bot\}\rightarrow ~\mathrm{Nodes}(\E)\cup\{\bot\}$ be the function defined by:

    $$\gamma(\nu)= \left\{
    \begin{array}{l@{\ }l}
    \mathrm{father}(\nu) & \mbox{ if } \mathrm{sym}(\mathrm{father}(\nu))=^{*_c} \mbox{ and }\nu\neq \nu_{\E}\\
   \mathrm{right}(\mathrm{father}(\nu)) & \mbox{ if } \mathrm{sym}(\mathrm{father}(\nu))=\cdot_c\\
    \bot & \mbox{ otherwise }\\
    \end{array}\right.$$

\noindent where $\bot$ is an artificial node such that $\gamma(\bot)=\bot$. The $\mathrm{ZPC}$-Structure is the syntax tree equipped with $\gamma(\nu)$ links.

We extend the relation $\preccurlyeq$ to the set of nodes of $T_{\E}$: For two nodes $\mu$ and $\nu$ we write $\nu\preccurlyeq\mu\Leftrightarrow T_{\E_{\nu}}\preccurlyeq T_{\E_{\mu}}$. We define the set $\Gamma_{\nu}(\E)=\{\mu\in\mathrm{Nodes}(\E)\mid \nu \preccurlyeq \mu \land \gamma(\mu)\neq \bot\}$ which is totally ordered by $\preccurlyeq$.

\begin{proposition}\label{b}
Let $\E$ be linear regular expression, $1\leq k\leq n$ be two integers and $f$ be in $\Sigma^{\E}\cap\Sigma_n$. Then $\Follow(\E,f,k)= ((((\First(\E_{\nu_0})\cdot_{op(\nu_1)} \First(\E_{\gamma(\nu_1)}))\cdot_{op(\nu_2)} \First(\E_{\gamma(\nu_2)}))
\dots \cdot_{op(\nu_{m})} \First(\E_{\gamma(\nu_m)}))$ where $\nu_f$ is the node of $T_{\E}$ labelled by $f$, $\nu_0$ is the $k\mbox{-}\mathrm{child}(\nu_{f})$, $\Gamma_{\nu_{f}}(\E)=\{\nu_1, \dots,\nu_m \}$ and for $1\leq i\leq m,~op(\nu_i)=c$ such that $\mathrm{sym}(\mathrm{father}(\nu_i)) \in \{\cdot_c,{*_c}\}$.
\end{proposition}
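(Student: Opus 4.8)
The plan is to prove the identity by structural induction on $\E$, peeling off the topmost operator and reading off, at each level, one factor of the product indexed by $\Gamma_{\nu_f}(\E)$ directly from Lemma~\ref{Computfollow}. First I would fix the meaning of the symbol $\cdot_c$ as it is used on $\First$-sets in the statement: following Lemma~\ref{firstComput} it denotes the set operation $S\cdot_c T=(S\setminus\{c\})\cup T$ if $c\in S$ and $S\cdot_c T=S$ otherwise. Since $c$ is a constant, $c\in\First(\h)\Leftrightarrow c\in\llbracket\h\rrbracket$, so the guard $c\in\llbracket\E_1\rrbracket$ of Lemma~\ref{firstComput} and the guard $c\in\Follow(\E_1,f,k)$ of Lemma~\ref{Computfollow} are exactly the test $c\in S$ of this operation. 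With this reading the two ``productive'' clauses of Lemma~\ref{Computfollow} become $\Follow(\E_1\cdot_c\E_2,f,k)=\Follow(\E_1,f,k)\cdot_c\First(\E_2)$ when $f\in\Sigma^{\E_1}$, and $\Follow(\E_1^{*_c},f,k)=\Follow(\E_1,f,k)\cdot_c\First(\E_1)$. The star identity is precisely where the preprocessing replacing each ${\f}^{*_c}$ by $(\f+c)^{*_c}$ is used: it guarantees $c\in\First(\E_1)$, so that $(\Follow(\E_1,f,k)\setminus\{c\})\cup\First(\E_1)=\Follow(\E_1,f,k)\cup\First(\E_1)$, recovering the clause of Lemma~\ref{Computfollow}.

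The base case is $\E=f(\E_1,\dots,\E_n)$: then $\nu_f=\nu_{\E}$ is the root, so $\Gamma_{\nu_f}(\E)=\emptyset$, $m=0$, and the product collapses to $\First(\E_{\nu_0})=\First(\E_k)$, which is $\Follow(f(\E_1,\dots,\E_n),f,k)$ by the first clause of Lemma~\ref{Computfollow}. For the inductive step the organising principle is a bookkeeping fact about $\gamma$: when one passes from a subexpression to the expression $\E$ obtained by applying one operator, the only node whose $\gamma$-value can change is the root $\rho$ of that subexpression, since every node strictly below $\rho$ keeps its father and hence its $\gamma$-link, its $op$-value and the subtree targeted by $\gamma$. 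Consequently all factors of the product coming from nodes below $\rho$ are inherited verbatim and the induction hypothesis applies to them. I would then split on the outer operator. If $\E=\E_1+\E_2$, or $\E=g_i(\E_1,\dots,\E_m)$ with $g_i\neq f$, then $\rho$ satisfies $\gamma(\rho)=\bot$, no factor is appended, and the formula for $\E$ equals the formula for the subexpression containing $f$, matching the corresponding pass-through clause of Lemma~\ref{Computfollow}; the right child of a concatenation, $\E=\E_1\cdot_c\E_2$ with $f\in\Sigma^{\E_2}$, also gives $\gamma(\rho)=\bot$ and no appended factor, but its agreement with Lemma~\ref{Computfollow} needs the discussion below. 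If $\E=\E_1\cdot_c\E_2$ with $f\in\Sigma^{\E_1}$, the root $\rho_1$ of $\E_1$ is the left child of the $\cdot_c$-node, so $\gamma(\rho_1)$ is the root of $\E_2$ and $op(\rho_1)=c$; appending the factor $\First(\E_{\gamma(\rho_1)})=\First(\E_2)$ turns the induction hypothesis $\Follow(\E_1,f,k)$ into $\Follow(\E_1,f,k)\cdot_c\First(\E_2)=\Follow(\E,f,k)$. If $\E=\E_1^{*_c}$, the root $\rho_1$ of $\E_1$ is the child of the star and $\rho_1\neq\nu_{\E}$, so $\gamma(\rho_1)=\nu_{\E}$, $op(\rho_1)=c$, and the appended factor $\First(\E_{\nu_{\E}})=\First(\E_1)$ yields $\Follow(\E_1,f,k)\cdot_c\First(\E_1)=\Follow(\E,f,k)$. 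In both appending cases $\rho_1$ is $\preccurlyeq$-maximal in $\Gamma_{\nu_f}(\E)$, so it becomes the last index $\nu_m$, which is exactly what the left-to-right, leaf-to-root evaluation ordered by $\preccurlyeq$ prescribes.

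The step I expect to be the real obstacle is the concatenation with $f\in\Sigma^{\E_2}$. Lemma~\ref{Computfollow} returns $\Follow(\E_2,f,k)$ only under the side condition $c\in\Last(\E_1)$, and $\emptyset$ otherwise, whereas the product carries no $\Last$-test. The reconciliation I would argue rests on the convention that the right child of a $\cdot_c$-node receives no $\gamma$-link, i.e. $\gamma(\rho_2)=\bot$, so that the formula simply returns $\Follow(\E_2,f,k)$; this is the intended value precisely in the ``live'' situation $c\in\Last(\E_1)$, which is the only one in which the position $f$ actually occurs in a tree of $\llbracket\E\rrbracket$ and hence the only one that feeds a transition of ${\cal P}_{\E}$. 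I would therefore make this hypothesis explicit (or restrict to expressions without such dead subexpressions) and show that this is the sole configuration in which a naive unfolding of $\Follow$ and the $\Gamma_{\nu_f}$-product can diverge. Once this point and the one-level behaviour of $\gamma$ are settled, the remaining verifications — that $S\cdot_c T$ propagates correctly through the inherited factors, and that the $\preccurlyeq$-order on $\Gamma_{\nu_f}(\E)$ coincides with the bottom-up order of operator applications — are routine.
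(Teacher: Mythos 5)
Your proof follows essentially the same route as the paper's: structural induction on $\E$, with the base case at $\nu_f=\nu_{\E}$ where $\Gamma_{\nu_f}(\E)=\emptyset$ and the product reduces to $\First(\E_k)$, one factor $\First(\E_{\gamma(\nu_i)})$ appended exactly when the enclosing operator is a $\cdot_c$ seen from its left child or a $*_c$, and all other operators passing the product through unchanged. The only place you depart from the paper is the subcase $\E=\E_1\cdot_c\E_2$ with $f\in\Sigma^{\E_2}$ and $c\notin\Last(\E_1)$, where Lemma~\ref{Computfollow} gives $\emptyset$ while the product gives $\Follow(\E_2,f,k)$: the paper's proof silently absorbs this into its pass-through case without comment, so your explicit side condition (or restriction to expressions with no such dead subexpressions) is a correction of a real imprecision in the statement rather than a gap in your argument.
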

\begin{proof}
\begin{sloppy}
  By induction over the structure of $E$.
  \begin{enumerate}
    \item Let us suppose that $E=f(\E_1,\ldots,\E_n)$. Then $\Follow(\E,f,k)=\First(\E_k)$. Since by definition $\nu_f$ is the root of $T_{\E}$, $k\mbox{-}\mathrm{child}(\nu_{f})$ is the root of $\E_{\nu_0}=E_k$. Hence $\First(\E_{\nu_0})=\First(\E_k)= \Follow(\E,f,k)$.
    \item Let us suppose that $E=g(\E_1,\ldots,\E_m)$ with $g\neq f$, or $\E=\E_1+\E_2$, or $\E=\E_1 \cdot_c \E_2$ with $f\in \Sigma^{\E_2}$. Then $\Follow(\E,f,k)=\Follow(\E_j,f,k)$ with $f\in \Sigma^{\E_j}$. By induction hypothesis, $\Follow(\E_j,f,k)=((((\First(\E_{\nu_0}) \cdot_{op(\nu_1)} \First(\E_{\gamma(\nu_1)}))\cdot_{op(\nu_2)} \First(\E_{\gamma(\nu_2)})) \dots \cdot_{op(\nu_{m})} \First(\E_{\gamma(\nu_m)}))$ where $\nu_f$ is the node of $T_{\E_j}$ labelled by $f$, $\nu_0$ is the $k\mbox{-}\mathrm{child}(\nu_{f})$, $\Gamma_{\nu_{f}}(\E_j)=\{\nu_1, \dots,\nu_m \}$ and for $1\leq i\leq m,op(\nu_i)=c$ such that $\mathrm{sym}(\mathrm{father}(\nu_i)) \in \{\cdot_c,{*_c}\}$. Since $T_{\E_j} \preccurlyeq T_{\E}$, $\Follow(\E_j,f,k)=((((\First(\E_{\nu_0}) \cdot_{op(\nu_1)} \First(\E_{\gamma(\nu_1)}))\cdot_{op(\nu_2)} \First(\E_{\gamma(\nu_2)})) \dots \cdot_{op(\nu_{m})} \First(\E_{\gamma(\nu_m)}))$ where $\nu_f$ is the node of $T_{\E}$ labelled by $f$, $\nu_0$ is the $k\mbox{-}\mathrm{child}(\nu_{f})$, $\Gamma_{\nu_{f}}(\E_j)=\{\nu_1, \dots,\nu_m \}$ and for $1\leq i\leq m,op(\nu_i)=c$ such that 
$\mathrm{sym}(\mathrm{father}(\nu_i)) \in \{\cdot_c,{*_c}\}$.    
    \item Let us suppose that $\E=\E_1 \cdot_c \E_2$ with $f\in\Sigma^{\E_1} $ (resp. $\E=\E_1^{*_c}$). Then $\Follow(\E,f,k)=\Follow(\E_1,f,k)\cdot_c \First(\G )$ with $G\in\{\E_1^{*_c},\E_2\}$. By induction hypothesis, $\Follow(\E_1,f,k)= ((((\First(\E_{\nu_0})\cdot_{op(\nu_1)} \First(\E_{\gamma(\nu_1)}))\cdot_{op(\nu_2)} \First(\E_{\gamma(\nu_2)})) \dots \cdot_{op(\nu_{m})} \First(\E_{\gamma(\nu_m)}))$ where $\nu_f$ is the node of $T_{\E_1}$ labelled by $f$, $\nu_0$ is the $k\mbox{-}\mathrm{child}(\nu_{f})$, $\Gamma_{\nu_{f}}(\E_j)=\{\nu_1, \dots,\nu_m \}$ and for $1\leq i\leq m,op(\nu_i)=c$ such that $\mathrm{sym}(\mathrm{father}(\nu_i)) \in \{\cdot_c,{*_c}\}$. 
    
    Since $T_{\E_1} \preccurlyeq T_{\E}$, by setting $\h=\E_{\nu_{m+1}}$ and $op(\nu_{m+1})=c$, $\Follow(\E_1,f,k)\cdot_c \First(\h)=((((\First(\E_{\nu_0}) \cdot_{op(\nu_1)} \First(\E_{\gamma(\nu_1)})) \cdot_{op(\nu_2)} \First(\E_{\gamma(\nu_2)})) \dots \cdot_{op(\nu_{m})} \First(\E_{\gamma(\nu_m)})) \cdot_{op(\nu_{m+1})} \First(\E_{\gamma(\nu_{m+1})})$ where $\nu_f$ is the node of $T_{\E}$ labelled by $f$, $\nu_0$ is the $k\mbox{-}\mathrm{child}(\nu_{f})$, $\Gamma_{\nu_{f}}(\E)=\{\nu_1, \dots,\nu_m,\nu_{m+1}\}$ and for $1\leq i\leq m+1,op(\nu_i)=c$ such that $\mathrm{sym}(\mathrm{father}(\nu_i)) \in \{\cdot_c,{*_c}\}$.
  \end{enumerate}
\end{sloppy}  
  \qed
\end{proof}

\subsection{Description of the algorithm and complexity}

An implicit construction of the word position automaton, the so-called ZPC-structure, has been developed by Ziadi et al. \cite{ZPC96,ZPC}.
Algorithm~\ref{zpc} extends this construction to the regular tree expressions. It constructs
a forest of trees where every tree rooted at a node $\nu_{\f}$ represents the set $\firstt{\f}$ according to Proposition~\ref{prop firstsup}. 

\begin{algorithm}[H]
\caption{$\mathrm{ZPC}$-Structure Construction}
\label{zpc}
\KwIn{Regular Expression $\E$.}
\KwOut{$\mathrm{ZPC}$-Structure}
 Construct the syntax tree  $T_{\E}$ of $\E$;\\
{\#}\\
\For{each node $\nu_{\f}$ on $T_{\E}$}{
	 Compute $\firs{\f}$\;
\textbf{end for}  }
{\# The construction of a $\First$ Forest}\\
\For{each node $\nu_{\f\cdot_c \G}$ in $T_{\E}$}{
	\If{$c\notin \firs{\f}$}{
 		 Remove the link $(\nu_{\f\cdot_c \G}, \nu_{\G})$\;
	\textbf{end if}}
\textbf{end for}}
{\# We have $\first{f_j(\E_1,\ldots,\E_n)}=\{f_j\}$}\\
\For{each node $\nu_{f_j(\E_1,\ldots\E_n)}$ in $T_{\E}$}{
	\For{$i=1$ to $n$}{
 		 Remove the link $(\nu_{f_j(\E_1,\ldots\E_n)}, \nu_{\E_i})$\;
		\textbf{end for}}
\textbf{end for}}
\For{each node $\nu_{\f}\in \Sigma_0$}{
Delete the node $\nu_{\f}$\; 
\textbf{end for}}
{\# }\\
{\# The construction of $\Follow$ links ($\gamma_{\nu}$ links)}\\
 \For{each node $\nu_{\f \cdot_c \G}$ in $T_{\E}$}{
  create a follow link from $\nu_{\f}$ to $\nu_{\G}$\;
  \textbf{end for}}
 \For{each node $\nu_{\f^{*_c}}$ in $T_{\E}$}{
  create a link from $\nu_{\f}$ to $\nu_{{\f}^{*_c}}$\;
  \textbf{end for} }
  {\ }\\
     \Return{$\mathrm{ZPC}$-Structure}
\end{algorithm}
\begin{example}
The syntax tree  $T_{\E}$ associated with the regular expression  $\E=(f_1(a)^{*_a}\cdot_a b+ h_2(b))^{*_b}+g_3(c,a)^{*_c}\cdot_c (f_4(a)^{*_a}\cdot_a b+ h_5(b))^{*_b}$ is given in Figure~\ref{fig a t e}. 
	\end{example}

\begin{figure}[H] 
	 \centerline{
	\begin{tikzpicture}[node distance=1.3cm,bend angle=30,transform shape,scale=.75]
		\node(1) {$+$};		
		\node[below right of=1,xshift=1.5cm] (3) {$\cdot_c$};
		\node[below left of=1,xshift=-1.5cm] (22) {$*_b$};			
		\node[below of=22,node distance=1cm] (221) {$+$};		
		\node[below left of=221,xshift=-.75cm] (5) {$+$};	 
		\node[below right of=221,xshift=.75cm] (222) {$b$};  
		\node[below right of=5] (16) {$h_2$};		 
		\node[below of=16] (4) {$b$};
		\node[below left of=5] (17) {$\cdot_a$};
		\node[below left of=17] (18) {$*_a$}; 
	 \node[below right of=17] (19) {$b$};
	  \node[below of=18] (181) {$+$};
	   \node[below of=181] (182) {$f_1$};
	   \node[right of=182] (184) {$a$};
	   \node[below of=182] (183) {$a$};
		\node[below right of=3,xshift=.75cm] (6) {$*_b$};
		\node[below of=6] (61) {$+$};
        \node[below right of=61] (62) {$b$}; 
		\node[below of=61] (51) {$+$};
		\node[below right of=51] (116) {$h_5$};		 
		\node[below of=116] (41) {$b$};
		\node[below left of=51] (117) {$\cdot_a$};
		\node[below left of=117] (118) {$*_a$}; 
	 \node[below right of=117] (119) {$b$};
	  \node[node distance=1cm,below of=118] (1181) {$+$};
	   \node[node distance=1cm,below of=1181] (1182) {$f_4$};
	  \node[node distance=1cm,below of=1182] (1183) {$a$};
	  \node[node distance=1cm,right of=1182] (1184) {$a$};
	\node[below left of=3,xshift=-.75cm] (31) {$*_c$};	
	\node[below of=31] (3121) {$+$};
	\node[below of=3121] (312) {$g_3$};
	\node[below right of=3121] (3122) {$c$};	 
	\node[below right of=312] (311) {$a$};
	\node[below left of=312] (313) {$c$};
	\node[inner sep=0pt,draw, circle, fit=(1183), color=blue] (eqa1) {};
	\node[inner sep=0pt,draw, circle, fit=(1184), color=blue] (eqa1) {};
	   \node[inner sep=0pt,draw, circle, fit=(51), color=blue] (eqa1) {};
	   \node[inner sep=0pt,draw, circle, fit=(116), color=blue] (eqa1) {};
	   \node[inner sep=0pt,draw, circle, fit=(41), color=blue] (eqa1) {};
	   \node[inner sep=0pt,draw, circle, fit=(117), color=blue] (eqa1) {};
	   \node[inner sep=0pt,draw, circle, fit=(118), color=blue] (eqa1) {};	   
	   \node[inner sep=0pt,draw, circle, fit=(119), color=blue] (eqa1) {};
	   \node[inner sep=0pt,draw, circle, fit=(1181), color=blue] (eqa1) {};
	   \node[inner sep=0pt,draw, circle, fit=(1182), color=blue] (eqa1) {};
	 \node[inner sep=0pt,draw, circle, fit=(1), color=blue] (eq1) {};
\node[inner sep=0pt,draw, circle, fit=(181), color=blue] (eqf1) {};
\node[inner sep=0pt,draw, circle, fit=(182), color=blue] (eqa1) {};		
\node[inner sep=0pt,draw, circle, fit=(183), color=blue] (eqa1) {};	
\node[inner sep=0pt,draw, circle, fit=(184), color=blue] (eqa1) {};
	    \node[inner sep=0pt,draw, circle, fit=(22), color=blue] (eq2) {};	    
	    \node[inner sep=0pt,draw, circle, fit=(221), color=blue] (eq2) {};   
	    \node[inner sep=0pt,draw, circle, fit=(222), color=blue] (eq2) {};
	    \node[inner sep=0pt,draw, circle, fit=(3), color=blue] (eq3) {};
	    \node[inner sep=0pt,draw, circle, fit=(5), color=blue] (eq4) {};
	    \node[inner sep=0pt,draw, circle, fit=(4), color=blue] (eq5) {};
	    \node[inner sep=0pt,draw, circle, fit=(16), color=blue] (eq6) {};
	    \node[inner sep=0pt,draw, circle, fit=(19), color=blue] (eq7) {};
	    \node[inner sep=0pt,draw, circle, fit=(18), color=blue] (eq8) {};	
	    \node[inner sep=0pt,draw, circle, fit=(17), color=blue] (eq9) {};
	    \node[inner sep=0pt,draw, circle, fit=(62), color=blue] (eq10) {};  
	    \node[inner sep=0pt,draw, circle, fit=(61), color=blue] (eq10) {};    
	    \node[inner sep=0pt,draw, circle, fit=(6), color=blue] (eq10) {};
	   \node[inner sep=0pt,draw, circle, fit=(311), color=blue] (eq11) {};
	    \node[inner sep=0pt,draw, circle, fit=(313), color=blue] (eq121) {};
	   \node[inner sep=0pt,draw, circle, fit=(3121), color=blue] (eq12) {};
	   \node[inner sep=0pt,draw, circle, fit=(312), color=blue] (eq12) {};
	   \node[inner sep=0pt,draw, circle, fit=(3122), color=blue] (eq12) {};
	   \node[inner sep=0pt,draw, circle, fit=(31), color=blue] (eq13) {};
	 \path[->]
	 (18) edge node[above left,pos=0.4] {} (181)
	 (181) edge node[above left,pos=0.4] {} (182)
	 (182) edge node[above left,pos=0.4] {} (183)
	 (181) edge node[above left,pos=0.4] {} (184)
	      (1) edge node[above left,pos=0.4] {} (22)
	      (1) edge node[above right,pos=0.4] {} (3)
	      (16) edge node[left,pos=0.4] {} (4)
	      (61) edge node[left,pos=0.4] {} (62)
	      (6) edge node[left,pos=0.4] {} (61)
	      (61) edge node[left,pos=0.4] {} (51)
	      (22) edge node[left,pos=0.4] {} (221)
	      (221) edge node[left,pos=0.4] {} (222)
	      (221) edge node[left,pos=0.4] {} (5)
	      (3) edge node[right,pos=0.4] {} (6)
	      (5) edge node[above left,pos=0.4] {} (16)
	      (5) edge node[above left,pos=0.4] {} (17)
	      (17) edge node[left,pos=0.4] {} (18)
	      (17) edge node[left,pos=0.4] {} (19)
	      (3) edge node[left,pos=0.4] {} (31)
	      (31) edge node[left,pos=0.4] {} (3121)
	      (3121) edge node[left,pos=0.4] {} (3122)
	      (3121) edge node[left,pos=0.4] {} (312)
	      (312) edge node[left,pos=0.4] {} (311)
	      (312) edge node[left,pos=0.4] {} (313)
	      (51) edge node[left,pos=0.4] {} (116)
	      (116) edge node[left,pos=0.4] {} (41)
	      (51) edge node[left,pos=0.4] {} (117)
	      (117)edge node[left,pos=0.4] {} (118)
	      (117)edge node[left,pos=0.4] {} (119)
	      (118)edge node[left,pos=0.4] {} (1181)(1181)edge node[left,pos=0.4] {} (1184)
	      (1181)edge node[left,pos=0.4] {} (1182)(1182)edge node[left,pos=0.4] {} (1183)
	     ;
	     \end{tikzpicture}
	 }
\caption{The syntax tree  $T_{\b\E}$ of $\b\E$}
	 \label{fig a t e}
	\end{figure}
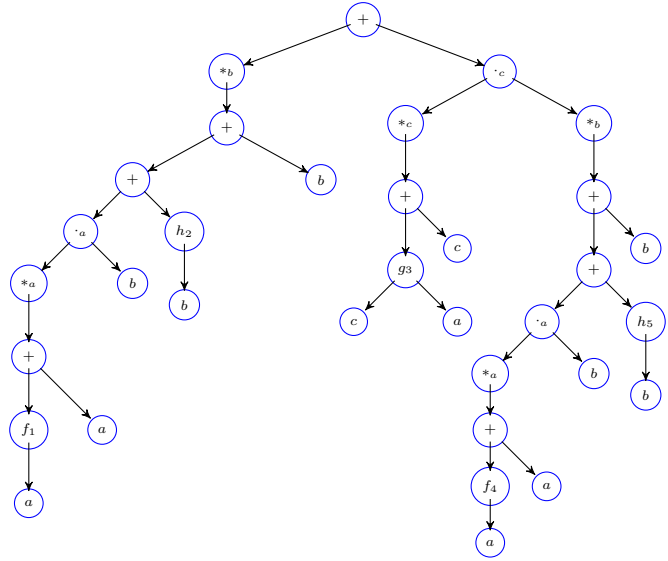
	
 The $\mathrm{ZPC}$-Structure associated with $\b\E=(f_1(a)^{*_a}\cdot_a b+ h_2(b))^{*_b}+g_3(c,a)^{*_c}\cdot_c (f_4(a)^{*_a}\cdot_a b+ h_5(b))^{*_b}$ is given in Figure~\ref{fig zpc e}.
\begin{figure}[H] 
	 \centerline{
	\begin{tikzpicture}[node distance=1.3cm,bend angle=30,transform shape,scale=.75]
		\node(1) {$+$};		
		\node[below right of=1,xshift=1.5cm] (3) {$\cdot_c$}; 
		\node[below left of=1,xshift=-1.5cm] (22) {$*_b$};			
		\node[below of=22,node distance=1cm] (221) {$+$};		
		\node[below left of=221,xshift=-.75cm] (5) {$+$};	 
		\node[below right of=221,xshift=.75cm] (222) {$b$};  
		\node[below right of=5] (16) {$h_2$};		 
		\node[below of=16] (4) {$\cancel{b}$};
		\node[below left of=5] (17) {$\cdot_a$};
		\node[below left of=17] (18) {$*_a$}; 
	 \node[below right of=17] (19) {$b$};
	  \node[below of=18] (181) {$+$};
	   \node[below of=181] (182) {$f_1$};
	   \node[right of=182] (184) {$a$};
	   \node[below of=182] (183) {$\cancel{a}$};
		\node[below right of=3,xshift=.75cm] (6) {$*_b$};
		\node[below of=6] (61) {$+$};
        \node[below right of=61] (62) {$b$}; 
		\node[below of=61] (51) {$+$};
		\node[below right of=51] (116) {$h_5$};		 
		\node[below of=116] (41) {$\cancel{b}$};
		\node[below left of=51] (117) {$\cdot_a$};
		\node[below left of=117] (118) {$*_a$}; 
	 \node[below right of=117] (119) {$b$};
	  \node[node distance=1cm,below of=118] (1181) {$+$};
	   \node[node distance=1cm,below of=1181] (1182) {$f_4$};
	  \node[node distance=1cm,below of=1182] (1183) {$\cancel{a}$};
	  \node[node distance=1cm,right of=1182] (1184) {$a$};
	\node[below left of=3,xshift=-.75cm] (31) {$*_c$};	
	\node[below of=31] (3121) {$+$};
	\node[below of=3121] (312) {$g_3$};
	\node[below right of=3121] (3122) {$c$};	 
	\node[below right of=312] (311) {$\cancel{a}$};
	\node[below left of=312] (313) {$\cancel{c}$};
	\node[inner sep=0pt,draw, circle, fit=(1184), color=blue] (eqa1) {};
	   \node[inner sep=0pt,draw, circle, fit=(51), color=blue] (eqa1) {};
	   \node[inner sep=0pt,draw, circle, fit=(116), color=blue] (eqa1) {};
	   \node[inner sep=0pt,draw, circle, fit=(117), color=blue] (eqa1) {};
	   \node[inner sep=0pt,draw, circle, fit=(118), color=blue] (eqa1) {};	   
	   \node[inner sep=0pt,draw, circle, fit=(119), color=blue] (eqa1) {};
	   \node[inner sep=0pt,draw, circle, fit=(1181), color=blue] (eqa1) {};
	   \node[inner sep=0pt,draw, circle, fit=(1182), color=blue] (eqa1) {};
	 \node[inner sep=0pt,draw, circle, fit=(1), color=blue] (eq1) {};
     \node[inner sep=0pt,draw, circle, fit=(181), color=blue] (eqf1) {};
     \node[inner sep=0pt,draw, circle, fit=(182), color=blue] (eqa1) {};
     \node[inner sep=0pt,draw, circle, fit=(184), color=blue] (eqa1) {};
	    \node[inner sep=0pt,draw, circle, fit=(22), color=blue] (eq2) {};	    
	    \node[inner sep=0pt,draw, circle, fit=(221), color=blue] (eq2) {};   
	    \node[inner sep=0pt,draw, circle, fit=(222), color=blue] (eq2) {};
	    \node[inner sep=0pt,draw, circle, fit=(3), color=blue] (eq3) {};
	    \node[inner sep=0pt,draw, circle, fit=(5), color=blue] (eq4) {};
	    \node[inner sep=0pt,draw, circle, fit=(16), color=blue] (eq6) {};
	    \node[inner sep=0pt,draw, circle, fit=(19), color=blue] (eq7) {};
	    \node[inner sep=0pt,draw, circle, fit=(18), color=blue] (eq8) {};	
	    \node[inner sep=0pt,draw, circle, fit=(17), color=blue] (eq9) {};
	    \node[inner sep=0pt,draw, circle, fit=(62), color=blue] (eq10) {};  
	    \node[inner sep=0pt,draw, circle, fit=(61), color=blue] (eq10) {};    
	    \node[inner sep=0pt,draw, circle, fit=(6), color=blue] (eq10) {};
	   \node[inner sep=0pt,draw, circle, fit=(3121), color=blue] (eq12) {};
	   \node[inner sep=0pt,draw, circle, fit=(312), color=blue] (eq12) {};
	   \node[inner sep=0pt,draw, circle, fit=(3122), color=blue] (eq12) {};
	   \node[inner sep=0pt,draw, circle, fit=(31), color=blue] (eq13) {};
	   \path[->]
	 	  (18) edge node[above left,pos=0.4] {} (181)
	      (181) edge node[above left,pos=0.4] {} (182)
	      (181) edge node[above left,pos=0.4] {} (184)
	      (1) edge node[above left,pos=0.4] {} (22)
	      (1) edge node[above right,pos=0.4] {} (3)
	      (61) edge node[left,pos=0.4] {} (62)
	      (6) edge node[left,pos=0.4] {} (61)
	      (61) edge node[left,pos=0.4] {} (51)
	      (22) edge node[left,pos=0.4] {} (221)
	      (221) edge node[left,pos=0.4] {} (222)
	      (221) edge node[left,pos=0.4] {} (5)
	      (3) edge node[right,pos=0.4] {} (6)
	      (5) edge node[above left,pos=0.4] {} (16)
	      (5) edge node[above left,pos=0.4] {} (17)
	      (17) edge node[left,pos=0.4] {} (18)
	      (17) edge node[left,pos=0.4] {} (19)
	      (3) edge node[left,pos=0.4] {} (31)
	      (31) edge node[left,pos=0.4] {} (3121)
	      (3121) edge node[left,pos=0.4] {} (3122)
	      (3121) edge node[left,pos=0.4] {} (312)
	      (51) edge node[left,pos=0.4] {} (116)
	      (51) edge node[left,pos=0.4] {} (117)
	      (117)edge node[left,pos=0.4] {} (118)
	      (117)edge node[left,pos=0.4] {} (119)
	      (118)edge node[left,pos=0.4] {} (1181)
	      (1181)edge node[left,pos=0.4] {} (1184)
	      (1181)edge node[left,pos=0.4] {} (1182)
	     ;
	     \path[-,color=black,dotted]
	      (1182)edge node[left,pos=0.4] {} (1183)
	      (16) edge node[left,pos=0.4] {} (4)
	      (182) edge node[above left,pos=0.4] {} (183)
	      (312) edge node[left,pos=0.4] {} (311)
	      (312) edge node[left,pos=0.4] {} (313)
	      (116) edge node[left,pos=0.4] {} (41)
	      ;
	      \path[->,color=red]	         
          (31) edge[bend right=15](6)
	      (118)edge[bend right=15] (119)
	      (61)edge[bend left=35] (6) 
	      (18) edge[bend right=35] (19)	
	      (181) edge[bend left=55] (18)	      
	      (221) edge[bend left=65] (22)
	      (3121) edge[bend left=65](31)
	      (1181)edge[bend left=55] (118) 
	      ;
	     \end{tikzpicture}
	 }
\caption{The $\mathrm{ZPC}$-Structure of $\b\E$}
	 \label{fig zpc e}
	\end{figure}

\begin{theorem}\label{theorem-zpc}
 The $\mathrm{ZPC}$-Structure associated with $\E$ can be computed in $O(|\E|)$ time and space complexity.
\end{theorem}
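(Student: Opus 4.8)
The plan is to walk through the five phases of Algorithm~\ref{zpc} and charge $O(1)$ work and $O(1)$ new space either to each node or to each edge of the syntax tree $T_{\E}$, so that the whole construction stays within the $O(|\E|)$ budget. First I would observe that, by definition of the size, $T_{\E}$ has exactly $|\E|$ nodes; a single scan of $\E$ builds it together with the pointers $\mathrm{father}$, $\mathrm{son}$, $\mathrm{left}$ and $\mathrm{right}$ in $O(|\E|)$ time and space. Every later phase is then a bounded number of traversals of this fixed-size structure.

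Next I would dispose of the phases that are transparently linear. The three pruning loops --- removing the edge $(\nu_{\f\cdot_c\G},\nu_{\G})$ whenever $c\notin\firs{\f}$, removing the edges below each function node $\nu_{f_j(\E_1,\ldots,\E_n)}$, and deleting the constant leaves --- each inspect every node or every edge once and perform a constant-time pointer update, hence cost $O(|\E|)$. Likewise the two $\gamma$-link loops create exactly one follow link per $\cdot_c$ node and one link per $*_c$ node; since there are at most $|\E|$ such nodes this adds $O(|\E|)$ links in $O(|\E|)$ time and space. Thus the ZPC-structure has $O(|\E|)$ nodes and $O(|\E|)$ links, which settles the space bound and reduces the time bound to the cost of the remaining phase.

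The real content, and the step I expect to be the main obstacle, is the second loop, which must supply for every concatenation node $\nu_{\f\cdot_c\G}$ the single Boolean ``$c\in\firs{\f}$'' consumed by the first pruning loop (recall that, $c$ being a constant, $c\in\firs{\f}$ is equivalent to $c\in\llbracket\f\rrbracket$). I would compute these bits by a single post-order traversal driven by Proposition~\ref{s}. The delicate point is that one cannot afford to materialise the sets $\firs{\E_{\nu}}$ themselves: since the star rule $\firs{\E_1^{*_c}}=\firs{\E_1}$ and the sum rule $\firs{\E_1+\E_2}=\firs{\E_1}\cup\firs{\E_2}$ let a single block of constants be copied up an arbitrarily long spine, the cumulative size $\sum_{\nu}|\firs{\E_{\nu}}|$ can be $\Theta(|\E|^2)$. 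Hence the obstacle is to extract, at each $\cdot_c$ node, the one membership bit that is actually needed without ever enumerating a whole set.

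I would exploit that $T_{\E}$ is a tree, so each node's $\firs$ information is consumed exactly once, by its father; this legitimises a destructive bottom-up evaluation in which the child's representation is reused when forming the father. The remaining difficulty --- and the true crux --- is to bound by $O(|\E|)$ the total cost of the membership tests, single-element deletions and merges prescribed by Proposition~\ref{s}. I would first reduce the question to a per-constant footprint: charging each constant occurrence with one insertion and each $\cdot_c$ (resp.\ $*_c$) node with one membership test and at most one deletion, so that the only operation that can exceed a linear budget is the union at $+$ nodes. Controlling those unions is where the accounting must be made tight; under the standard convention of this literature that the ranked alphabet is fixed, each subset of $\Sigma_0$ is manipulated in $O(1)$ and the bound is immediate, whereas for an unbounded alphabet one must share the bottom-up computation across constants so that a constant is touched only proportionally to its footprint in $\E$. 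Establishing this last point completes the argument; summing the five phases then yields the announced $O(|\E|)$ time and space complexity.
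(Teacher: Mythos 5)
Your proof is correct and follows essentially the same route as the paper's: a phase-by-phase linear traversal of Algorithm~\ref{zpc}, with each set $\firs{\f}$ represented as an array indexed by the symbols of $\Sigma_0$ so that every node of $T_{\E}$ contributes $O(1)$ work and space under the paper's standing convention that the size of the ranked alphabet is constant. You are merely more explicit than the paper about the one delicate point --- that the membership bits $c\in\firs{\f}$ must not be obtained by materialising all the sets when the alphabet is unbounded --- which the paper absorbs silently into that fixed-alphabet assumption.
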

\begin{proof}
The first step of our Algorithm~\ref{zpc} consists of computing the sets $\firs{\f}$ for all subexpressions $\f$ of $\E$. 
The set $\firs{\f}$ is represented by an array where the entries are indexed by symbols of $\S_0$. 
The computation of all sets $\firs{\f}$ requires $O(|\E|)$ time and space complexity.\\  

Now that we have computed the sets $\firs{}$, the second step consists of the construction of the $\First$ Forest. 
Recall that this $\First$ Forest encodes the $\firstt{\f}$ sets for all subexpressions $\f$ of $\E$. 
 Therefore, the set $\firstt{\f}$ can be obtained by a prefix traversal of the syntax tree of $\E$ in $O(|\E|)$ time and space complexity.
\qed
\end{proof}
As each node $\nu_{\f}$ encodes $\firstt{\E_{\nu_{\f}}}$ we can state the following lemma.
\begin{lemma}\label{l1}
For a subexpression $\f$ of $\E$ the set $\firstt{\f}$ can be computed in $O(|\f|)$ time and space complexity.
\end{lemma}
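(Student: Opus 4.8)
The plan is to read $\firstt{\f}$ directly off the First Forest produced by Algorithm~\ref{zpc}, rather than recomputing it from scratch. By Theorem~\ref{theorem-zpc} this structure is already available after an $O(|\E|)$ preprocessing, so the only quantity left to bound is the cost of extracting $\firstt{\f}$ from it for a single subexpression $\f$, and the whole point is to show that this extraction costs $O(|\f|)$ and not $O(|\E|)$.

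First I would recall how the First Forest sits inside the syntax tree. Algorithm~\ref{zpc} builds it from $T_{\E}$ by removing edges only: the right child of any $\cdot_c$-node with left child $\G$ is detached whenever $c\notin\firs{\G}$, and all child links below every node $\nu_{f_j(\E_1,\ldots,\E_n)}$ are removed; finally the constant leaves are deleted. No edge is ever added or rerouted. Consequently each tree of the forest is a connected subgraph of $T_{\E}$ consisting of a node together with some of its syntactic descendants; in particular the tree rooted at $\nu_{\f}$ lies entirely inside the syntax tree $T_{\f}$ of $\f$. Matching the recursive clauses of Proposition~\ref{prop firstsup} against these deletions, the positions (symbols of $\Sigma_{>}$) labelling the leaves of this tree are exactly the elements of $\firstt{\f}$; this is precisely the observation, recorded just before the lemma, that each node $\nu_{\f}$ encodes $\firstt{\E_{\nu_{\f}}}$.

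It then suffices to perform one prefix traversal of the tree rooted at $\nu_{\f}$ and output the symbols sitting at its leaves. Since this tree is a subgraph of $T_{\f}$, the traversal visits at most $|\f|$ nodes and uses at most $|\f|$ units of auxiliary space, which yields the announced $O(|\f|)$ time and space bound. The one point that needs care is exactly this locality claim, namely that the connected component of the forest containing $\nu_{\f}$ never escapes the syntax subtree of $\f$, so that the bound is $|\f|$ and not $|\E|$. This is guaranteed by the fact that Algorithm~\ref{zpc} performs only edge removals and leaf deletions and never introduces a link from a descendant of $\nu_{\f}$ to a node outside $T_{\f}$; equivalently, Proposition~\ref{prop firstsup} shows that $\firstt{\f}$ is determined by the subexpressions of $\f$ alone. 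I expect this locality argument to be the only genuine obstacle, the remainder being a routine linear-time tree traversal.
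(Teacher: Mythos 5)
Your proposal is correct and follows exactly the route the paper intends: the paper states Lemma~\ref{l1} as an immediate consequence of the $\First$ Forest built by Algorithm~\ref{zpc} (each node $\nu_{\f}$ encoding $\firstt{\E_{\nu_{\f}}}$, extracted by a prefix traversal as in the proof of Theorem~\ref{theorem-zpc}), and you have simply spelled out the locality argument that the paper leaves implicit. No discrepancy to report.
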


For a regular expression $\E$, the following algorithm allows to compute the set $\Follow(\E,f_j,k)$ for a symbol $f_j\in\Sigma^{\E}_m$ and integers $1 \leq k\leq m$.\\

\begin{algorithm}[H]
  \KwIn{Regular Expression $\E$.}
  \KwOut{$\{\Follow(\E,f_j,k)\mid\ f_j \in\Sigma^{\E}_m, 1\leq k\leq m\}$.}
  \nl Calculate~{$\Follow(\E,f_j,k)=\las{\E}{f_j}{k}\uplus\Fw{\E}{f_j}{k}$}\\
     $(1.1)$ \For{ $\nu=\nu_{f_j}$ to $\nu_{\E}$ }{
            Compute~$\las{\E_{\nu}}{f_j}{k}$\;
       \textbf{end for}}
      $(1.2)$ Compute~$\Fw{\E}{f_j}{k}$\;
      \Return{$(\las{\E}{f_j}{k}\uplus\Fw{\E}{f_j}{k})$}
  \caption{Algorithm for the function $\Follow$ for $f_j$ and $k$}
  \label{algo:follow}
\end{algorithm}

For each step of the Algorithm~\ref{algo:follow} we will evaluate the complexity in time and in space.

We denote by $\displaystyle\sum_{f_j\in\Sigma_{>}}\mathrm{r}(f_j)$ the sum of all ranks of symbols 
$f_j\in\Sigma_{>}$.\\ 

\noindent\emph{Step~$1$: Computation of $\Fll{\E}{f_j}{k}=\las{\E}{f_j}{k}\uplus\Fw{\E}{f_j}{k}$}\\
We are interesting about the computation of the sets $\las{\E}{f_j}{k}$ and $\Fw{\E}{f_j}{k}$. \\

\noindent\emph{Step~$1.1$: Computation of sets $\las{\E_{\nu}}{f_j}{k}$}\\

At each node $\nu$ of the syntax tree $T_{\E}$ of $\E$, the set $\las{\E_{\nu}}{f_j}{k}$ is represented by an array where the entries are indexed by symbols of $\Sigma_0$.
 The computation of the set $\las{\E_{\nu}}{f_j}{k}$ requires an $O(|\E|)$ time and space complexity.\\  

\noindent\emph{Step~$1.2$: Computation of $\Fw{\E}{f_j}{k}$}\\

Now that $\las{\E_{\nu}}{f_j}{k}$ for all node $\nu$, such that $\nu_{f_j} \preccurlyeq \nu$, are computed, we can use the techniques outlined in the case of words to calculate the set $\Fw{\E}{f_j}{k}$.  
Indeed, our formulas given in the Proposition~\ref{x} for the computation of $\Fw{\E}{f_j}{k}$ are similar to that defined in the case of words~\cite{Brug,ZPC}. We have the same formulas so we can use the same algorithms used in the paper ~~\cite{ZPC} for the computation of the sets $\Follow$.
Therefore, the computation of  $\Fw{\E}{f_j}{k}$ can be done in $O(|\E|)$ time complexity.\\

We denote by $\mathrm{R}$ the maximal rank of symbols of $\Sigma$ appearing in $\E$. Recall that the alphabetic width $||\E||$, of a regular expression $\E$ is the sum of occurrences of symbols of a rank greater than $0$ appearing in $\E$ that is $||\E||=\sum_{f\in\Sigma_{>}}{|\E|}_f$. The size of the ranked alphabet $\Sigma$ is considered as constant. 
\begin{lemma}\label{lemcomplex}
Let $\E$ be a regular expression, $f_j$ be a symbol in $\Sigma^{\E}_m$ and $1 \leq k\leq m$ be two integers. The sets $\Follow(\E,f_j,k)$ for $1 \leq k\leq m$ can be computed in time $O(\mathrm{r}(f_j)\cdot|\E|)$.
\end{lemma}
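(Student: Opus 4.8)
The plan is to bound the cost of computing $\Follow(\E,f_j,k)$ for one value of $k$ and then sum over the $m=\mathrm{r}(f_j)$ admissible indices $1\leq k\leq m$. First I would note that, since $f_j\in\Sigma_{>}$ and $\E$ is linear, there is a unique node $\nu_{f_j}$ of $T_{\E}$ labelled by $f_j$. Fixing $k$, I would use the decomposition $\Follow(\E,f_j,k)=\las{\E}{f_j}{k}\uplus\Fw{\E}{f_j}{k}$ and argue that each summand costs $O(|\E|)$. The rank-$0$ part is handled by Step~$1.1$: applying the formulas of Proposition~\ref{a} along the path from $\nu_{f_j}$ up to $\nu_{\E}$, each node holding an array indexed by the constant-size set $\Sigma_0$, so that one propagation step is $O(1)$ and the path of length $O(|\E|)$ gives $O(|\E|)$. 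The rank-$>0$ part is handled by Step~$1.2$: by Proposition~\ref{x} and Remark~\ref{remark} the formulas coincide with the word case, so the $O(|\E|)$ procedure of~\cite{ZPC} applies verbatim on the $\mathrm{ZPC}$-Structure built once in $O(|\E|)$ by Theorem~\ref{theorem-zpc}, the seed $\firstt{\E_k}$ being read from the $\First$ Forest in $O(|\E_k|)$ by Lemma~\ref{l1}.

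Next I would observe that the only dependence on $k$ is the value deposited at $\nu_{f_j}$, namely $\firs{\E_k}$ for $\las{\E}{f_j}{k}$ and $\firstt{\E_k}$ for $\Fw{\E}{f_j}{k}$: the set $\Gamma_{\nu_{f_j}}(\E)$ that drives the upward propagation (Proposition~\ref{b}) is independent of $k$. Hence each of the $m$ instances is the same $O(|\E|)$ pass performed with a different seed, and summing yields
\[
\sum_{k=1}^{m} O(|\E|) = O(m\cdot|\E|) = O(\mathrm{r}(f_j)\cdot|\E|),
\]
which is the announced bound.

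The step that will require the most care is confirming that one pass is truly $O(|\E|)$, with no factor hidden in the set manipulations. Concretely I must check that each operation occurring in Propositions~\ref{a} and~\ref{x} --- the union with a $\First$ set, the deletion of the constant $c$, and the tests $c\in\las{\E_{\nu}}{f_j}{k}$ and $c\in\Last(\E_{\nu})$ --- runs in $O(1)$ amortized time using the array encodings over $\Sigma_0$ (treated as constant) together with the precomputed $\firs{\f}$ and $\Last(\f)$ values available from the $\mathrm{ZPC}$-Structure, so that the number of nodes on the path, $O(|\E|)$, dominates. Once this constant-time-per-step accounting is secured the summation over $k$ is immediate; the resulting bound is moreover essentially optimal, since merely emitting the $m$ output sets already requires $\Omega(m)$ space in the worst case.
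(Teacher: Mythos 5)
Your proposal is correct and follows essentially the same route as the paper: the decomposition $\Follow(\E,f_j,k)=\las{\E}{f_j}{k}\uplus\Fw{\E}{f_j}{k}$, an $O(|\E|)$ bound for the rank-$0$ part via the array representation over $\Sigma_0$ propagated up from $\nu_{f_j}$, an $O(|\E|)$ bound for the rank-$>0$ part by invoking the word-case $\mathrm{ZPC}$ technique through Proposition~\ref{x} and Remark~\ref{remark}, and finally summation over the $m=\mathrm{r}(f_j)$ values of $k$. Your added care about constant-time set operations (with $|\Sigma_0|$ treated as constant, as the paper assumes) only makes explicit what the paper leaves implicit.
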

As $(\displaystyle\sum_{f_j\in\Sigma_>}(\mathrm{r}(f_j)))$ is bounded by $(\mathrm{R}\cdot||\E||)$ we can state the following theorem.
\begin{theorem}\label{thmcomplex}
The sets $\Follow(\E,f_j,k)$ for all symbols $f_j$ in $\Sigma^{\E}_{>}$ and for all $1\leq k\leq \mathrm{r}(f_j)$ can be computed with an $O(\mathrm{R}\cdot||\E||\cdot|\E|)$ time complexity.
\end{theorem}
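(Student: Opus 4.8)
The plan is to derive Theorem~\ref{thmcomplex} as an immediate corollary of Lemma~\ref{lemcomplex} by summing the per-symbol cost over all non-constant symbols of $\E$ and then bounding the resulting sum. First I would invoke Lemma~\ref{lemcomplex}, which asserts that for a fixed $f_j\in\Sigma^{\E}_m$ the whole family $\{\Follow(\E,f_j,k)\mid 1\leq k\leq m\}$ (with $m=\mathrm{r}(f_j)$) can be computed in time $O(\mathrm{r}(f_j)\cdot|\E|)$. To obtain every $\Follow$ set required for the construction it suffices to run this computation once for each position $f_j\in\Sigma^{\E}_{>}$, so the total time is the sum $\sum_{f_j\in\Sigma^{\E}_{>}}\mathrm{r}(f_j)\cdot|\E|$, which factors as $\bigl(\sum_{f_j\in\Sigma^{\E}_{>}}\mathrm{r}(f_j)\bigr)\cdot|\E|$ since $|\E|$ does not depend on $f_j$.

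The arithmetic core is then the bound already announced in the text just before the theorem, namely $\sum_{f_j\in\Sigma^{\E}_{>}}\mathrm{r}(f_j)\leq \mathrm{R}\cdot||\E||$. I would justify this by noting that each summand $\mathrm{r}(f_j)$ is at most $\mathrm{R}$ (the maximal rank occurring in $\E$), while the number of terms in the sum is exactly the number of occurrences of symbols of rank greater than $0$, which is the alphabetic width $||\E||=\sum_{f\in\Sigma_{>}}{|\E|}_f$. Replacing every factor $\mathrm{r}(f_j)$ by its upper bound $\mathrm{R}$ and counting $||\E||$ terms yields $\sum_{f_j\in\Sigma^{\E}_{>}}\mathrm{r}(f_j)\leq \mathrm{R}\cdot||\E||$. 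Substituting this bound into the factored total gives the claimed $O(\mathrm{R}\cdot||\E||\cdot|\E|)$ time complexity.

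There is no genuinely hard step here: the theorem is a bookkeeping consequence of the per-symbol lemma together with a one-line counting inequality, and the real work lies upstream in establishing Lemma~\ref{lemcomplex} (via Steps $1.1$ and $1.2$, which rely on Proposition~\ref{a}, Proposition~\ref{x}, and Remark~\ref{remark} reducing $\Follw$ to the word case). The only point demanding a little care is to make explicit that one runs the Lemma~\ref{lemcomplex} procedure exactly once per position $f_j$ and that the per-position costs add without hidden overlap, so that the outer summation is legitimate; once that is observed, the bound on $\sum \mathrm{r}(f_j)$ closes the argument. I would therefore present the proof as: cite Lemma~\ref{lemcomplex}, sum over $f_j\in\Sigma^{\E}_{>}$, apply the inequality $\sum_{f_j\in\Sigma^{\E}_{>}}\mathrm{r}(f_j)\leq \mathrm{R}\cdot||\E||$, and conclude.
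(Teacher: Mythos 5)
Your proposal matches the paper's own argument: the paper derives Theorem~\ref{thmcomplex} directly from Lemma~\ref{lemcomplex} by summing the per-symbol cost $O(\mathrm{r}(f_j)\cdot|\E|)$ over all $f_j\in\Sigma^{\E}_{>}$ and bounding $\sum_{f_j\in\Sigma_{>}}\mathrm{r}(f_j)$ by $\mathrm{R}\cdot||\E||$, exactly as you do. Your write-up is in fact slightly more explicit than the paper's one-line justification, but the route is identical.
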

\subsection{Improving the computation of the function $\Follow$}
In this section we present a simple transformation of the regular expression $\E$ which allows us to efficiently compute the sets $\Follow$. For a subexpression $f_j(\E_1,\ldots,\E_m)$ of $\E$ and a symbol $a$ in $\displaystyle \bigcup^m_{i=1}\firs{\E_i}$ we associate an expression $\E^a_{f_j}$ obtained from $\E$ by replacing the subexpression $f_j(\E_1,\ldots,\E_m)$ by the expression $f_j(a)$. 
\begin{example}    
For the regular expression $\E=f(a+g(b),a+b+h(a))^{*_a}\cdot_b l(b)$. We get $\E^a_f=f(a)^{*_a}\cdot_b l(b)$ and $\E^b_f=f(b)^{*_a}\cdot_b l(b)$
\end{example}
For all subexpressions $f_j(\E_1,\ldots,\E_m)$ of $\E$ and for a symbol $a\in\displaystyle \bigcup^m_{i=1}\firs{\E_i}$, the following proposition gives the link between $\Follow(\E,f_j,k)$ and $\Follow(\E^a_{f_j},f_j,1)$.

\begin{proposition}\label{propFw}
Let $\E$ be a regular expression, $f_j(\E_1,\ldots,\E_m)$ be a subexpression of $\E$ and $1\leq k\leq m$ be two integers. 
 
 The set $\Follow(\E,f_j,k)$ can be computed as follows: 
 $$\Follow(\E,f_j,k)=\firstt{\E_k}\uplus \displaystyle\bigcup_{a\in \firs{\E_k}}\Follow(\E^a_{f_j},f_j,1)$$
\end{proposition}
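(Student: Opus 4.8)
The plan is to rewrite both $\Follow(\E,f_j,k)$ and each $\Follow(\E^a_{f_j},f_j,1)$ through the common \emph{context} of the node $\nu_{f_j}$ of $f_j$ in the syntax tree, using the product decomposition of Proposition~\ref{b}, and then to show that this context acts additively and leaves higher-rank symbols untouched. Concretely, writing $\Gamma_{\nu_{f_j}}(\E)=\{\nu_1,\dots,\nu_\ell\}$ (ordered by $\preccurlyeq$), $c_i=op(\nu_i)\in\Sigma_0$ and $S_i=\First(\E_{\gamma(\nu_i)})$, I would introduce the operator $\Phi(X)=(\cdots((X\cdot_{c_1}S_1)\cdot_{c_2}S_2)\cdots\cdot_{c_\ell}S_\ell)$, where the set product is the one read off from Lemma~\ref{Computfollow}: $L\cdot_{c_i}S_i=(L\setminus\{c_i\})\cup S_i$ when $c_i\in L$, and $L\cdot_{c_i}S_i=L$ otherwise. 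Proposition~\ref{b} then reads $\Follow(\E,f_j,k)=\Phi(\First(\E_k))$, while applying the same proposition to $\E^a_{f_j}$ (which is again linear, so Proposition~\ref{b} applies) gives $\Follow(\E^a_{f_j},f_j,1)=\Phi(\First(a))=\Phi(\{a\})$, since the unique child of $f_j$ in $\E^a_{f_j}$ is the constant $a$.

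The first thing to check is that $\Phi$ really is the \emph{same} operator for $\E$ and for $\E^a_{f_j}$. Since $\E^a_{f_j}$ is obtained from $\E$ by altering only the subtree rooted at $\nu_{f_j}$, the chain of ancestors of $\nu_{f_j}$, their $\gamma$-links, the operators $c_i$ and the index set $\Gamma_{\nu_{f_j}}$ are literally identical in the two expressions. The one delicate point is that for an ancestor $\nu_i$ whose $\gamma$-link is a star-parent, the target subexpression $\E_{\gamma(\nu_i)}=\E_{\nu_i}^{*_{c}}$ still contains the modified $f_j$-subtree; here I would argue that $S_i=\First(\E_{\gamma(\nu_i)})$ is nonetheless unchanged, because $\First$ depends only on the root symbols of the subexpressions and on whether constants belong to the relevant sublanguages, and a subexpression $f_j(\dots)$ contributes only trees rooted at $f_j$, never a bare constant, so replacing its children by $a$ affects neither datum. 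Hence $S_i$, and therefore $\Phi$, is common to both expressions.

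Next I would establish two structural properties of $\Phi$. Additivity: each elementary step $\psi_i(L)=L\cdot_{c_i}S_i$ satisfies $\psi_i(A\cup B)=\psi_i(A)\cup\psi_i(B)$, which follows from a short case distinction on $c_i\in A$ and $c_i\in B$ (both sides equal $((A\cup B)\setminus\{c_i\})\cup S_i$ if $c_i\in A\cup B$, and $A\cup B$ otherwise); composing the steps gives $\Phi(A\cup B)=\Phi(A)\cup\Phi(B)$. Fixing of higher-rank sets: if $X\subseteq\Sigma_{>}$ then no $c_i\in\Sigma_0$ ever enters the running set, every step is inactive, and $\Phi(X)=X$.

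Finally I would combine these facts. Splitting $\First(\E_k)=\firs{\E_k}\uplus\firstt{\E_k}$ with $\firs{\E_k}\subseteq\Sigma_0$ and $\firstt{\E_k}\subseteq\Sigma_{>}$, additivity together with the fixing property yields $\Follow(\E,f_j,k)=\Phi(\firs{\E_k})\cup\Phi(\firstt{\E_k})=\Phi(\firs{\E_k})\cup\firstt{\E_k}$, and a second application of additivity over the singletons of $\firs{\E_k}$ gives $\Phi(\firs{\E_k})=\bigcup_{a\in\firs{\E_k}}\Phi(\{a\})=\bigcup_{a\in\firs{\E_k}}\Follow(\E^a_{f_j},f_j,1)$. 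The union is disjoint from $\firstt{\E_k}$ because, by linearity of $\E$, the higher-rank positions occurring in any $\Follow(\E^a_{f_j},f_j,1)$ stem only from the sibling subexpressions $\E_{\gamma(\nu_i)}$, which share no position with $\E_k$; this justifies the symbol $\uplus$ in the statement. The main obstacle, and the step deserving the most care, is the context-invariance argument of the second paragraph, especially the star-parent case, since it is what legitimizes applying Proposition~\ref{b} with one and the same operator $\Phi$ to $\E$ and to every $\E^a_{f_j}$.
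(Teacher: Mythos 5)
Your proposal is correct and follows essentially the same route as the paper: both start from the product decomposition of Proposition~\ref{b}, split $\First(\E_k)$ into $\firs{\E_k}\uplus\firstt{\E_k}$, pull the non-constant part through the chain of $c$-products unchanged, and distribute the remaining product over the singletons $\{a\}$, identifying each term with $\Follow(\E^a_{f_j},f_j,1)$. Your explicit verification that the context operator $\Phi$ is literally the same for $\E$ and every $\E^a_{f_j}$ (including the star-parent case) and your additivity lemma make rigorous two steps that the paper's computation leaves implicit.
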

\begin{proof}
\begin{sloppy}  
  
 For a subexpression $f_j(\E_1,\ldots,\E_n)$ of $\E$ and from Proposition~\ref{b}, the set $\Follow(\E,f_j,k)$ is of the form: 
$\Follow(\E,f_j,k)= ((((\First(\E_{\nu_0})\cdot_{op(\nu_1)} \First(\E_{\gamma(\nu_1)}))\cdot_{op(\nu_2)} \First(\E_{\gamma(\nu_2)}))
\dots \cdot_{op(\nu_{m})} \First(\E_{\gamma(\nu_m)}))$ where $\nu_{f_j}$ is the node of $T_{\E}$ labelled by $f_j$, $\nu_0$ is the $k\mbox{-}\mathrm{child}(\nu_{f_j})$, $\Gamma_{\nu_{f_j}}(\E)=\{\nu_1, \dots,\nu_m \}$ and for $1\leq i\leq m,~op(\nu_i)=c$ such that $\mathrm{sym}(\mathrm{father}(\nu_i)) \in \{\cdot_c,{*_c}\}$. 
 
  \begin{align*} 
\Follow(\E,f_j,k)&=((((\First(\E_{\nu_0})\cdot_{op(\nu_1)} \First(\E_{\gamma(\nu_1)}))\cdot_{op(\nu_2)} \First(\E_{\gamma(\nu_2)}))
\dots \cdot_{op(\nu_{m})} \First(\E_{\gamma(\nu_m)}))\\
&\ \ \mbox{ with } \E_{\nu_0}=\E_k\\
&=((((\firstt{\E_{\nu_0}}\uplus\firs{\E_{\nu_0}})\cdot_{op(\nu_1)} \First(\E_{\gamma(\nu_1)}))\cdot_{op(\nu_2)}
\dots \cdot_{op(\nu_{m})} \First(\E_{\gamma(\nu_m)}))\\
&\ \ \mbox{ with } \E_{\nu_0}=\E_k \mbox{ and } \First(\E_{\nu_0})=\firstt{\E_{\nu_0}}\uplus\firs{\E_{\nu_0}}\\
&=\firstt{\E_{\nu_0}}\uplus(((\firs{\E_{\nu_0}}\cdot_{op(\nu_1)} \First(\E_{\gamma(\nu_1)}))\cdot_{op(\nu_2)}
\dots \cdot_{op(\nu_{m})} \First(\E_{\gamma(\nu_m)}))\\
&=\firstt{\E_{\nu_0}}\uplus(((\displaystyle\bigcup_{a\in \firs{\E_{\nu_0}}}(a)\cdot_{op(\nu_1)} \First(\E_{\gamma(\nu_1)}))\cdot_{op(\nu_2)}\dots \cdot_{op(\nu_{m})} \First(\E_{\gamma(\nu_m)}))
\end{align*}
By using this last formula and the modifications: for all symbols $a \in\displaystyle \bigcup\firs{\E_{\nu_0}}$ we associate an expression 
$\E^a_{f_j}$ obtained from $\E$ by replacing the subexpression $f_j(\E_1,\ldots,\E_n)$ by the expression $f_j(a)$, then we have for $a\in\firs{\E_{\nu_0}}$:
\begin{align*} 
\Follow(\E^a_{f_j},f_j,1)=&(((a\cdot_{op(\nu_1)} \First(\E_{\gamma(\nu_1)}))\cdot_{op(\nu_2)}\dots \cdot_{op(\nu_{m})} \First(\E_{\gamma(\nu_m)}))
\end{align*} 
Therefore, for all symbols $a \in\displaystyle \bigcup\firs{\E_{\nu_0}}$: 
\begin{align*} 
\Follow(\E,f_j,k)=&\firstt{\E_{\nu_0}}\uplus \displaystyle \bigcup_{a \in\firs{\E_{\nu_0}}}\Follow(\E^a_{f_j},f_j,1)
\end{align*}
\end{sloppy}   
 \qed
\end{proof}

As the rank of the symbol $f_j$ in $\E^a_{f_j}$ is $1$ and by Lemma~\ref{lemcomplex}, the set $\Follow(\E^a_{f_j},f_j,1)$ can be computed in time $O(|\E|)$. 
 This step is considered as a preprocessing and is common to each symbol $a$ such that $a$ is in $\displaystyle\bigcap^n_{k=1} \firs{\E_k}$ for all $1\leq k\leq n$.

So, one can compute in first time the sets $\Follow(\E^a_{f_j},f_j,1)$ for all $a$ in $\displaystyle\bigcup^n_{k=1}\firs{\E_k}$ in $O(|\E|)$ time complexity. In the second time, from these sets and the set $\firstt{\E_k}$ we construct the set $\Follow(\E,f_j,k)$ using  formula of Proposition~\ref{propFw}. This second step can be performed in $O(|\E_k|+||\E||)$ time complexity. Indeed from Lemma~\ref{firstComput}, $\firs{\E_k}$ can be computed in time $O(|\E_k|)$  and the set $\displaystyle\bigcup_{a\in \firs{\E_k}} \Follow(\E^a_{f_j},f_j,1)$ can be constructed from the sets computed in the first step with an $O(||\E||)$ time complexity. 

\noindent As $\displaystyle(\sum^n_{k=1}|\E_k|)<|\E|$ and $(\mathrm{r}(f_j)\cdot||\E||)<|\E|$ and as the first step is performed once for all $k$, $1\leq k\leq n$ and for all $a\in\displaystyle\bigcap^n_{k=1} \firs{\E_k}$, then, we can state the following proposition.
\begin{proposition}    
Let $\E$ be a regular expression and $f_j$ be a symbol in $\Sigma^{\E}_{>}$. The set $\Follow(\E,f_j,k)$ for all $1\leq k\leq \mathrm{r}(f_j)$ can be computed with an $O(|\E|)$ time complexity. 
\end{proposition}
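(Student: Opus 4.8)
The plan is to reduce the computation of all the sets $\Follow(\E,f_j,k)$, for $1\le k\le \mathrm{r}(f_j)$, to a single rank-$1$ computation that is shared across every value of $k$, using the decomposition of Proposition~\ref{propFw}. First I would set up the data structures once: build the $\mathrm{ZPC}$-Structure of $\E$ in $O(|\E|)$ time and space by Theorem~\ref{theorem-zpc}, so that $\firs{\f}$ and $\firstt{\f}$ are available for every subexpression $\f$ (Lemma~\ref{l1} and Lemma~\ref{firstComput}). Writing $f_j(\E_1,\dots,\E_m)$ for the subexpression rooted at $f_j$, Proposition~\ref{propFw} gives, for each $k$,
\[
\Follow(\E,f_j,k)=\firstt{\E_k}\uplus\bigcup_{a\in\firs{\E_k}}\Follow(\E^a_{f_j},f_j,1),
\]
so the only $k$-dependent ingredients are the first data $\firstt{\E_k}$, $\firs{\E_k}$ of the $k$-th child and the selection of which context sets to union.

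Next I would perform the preprocessing. For each constant $a\in\Sigma_0$ occurring in $\bigcup_{k=1}^{m}\firs{\E_k}$, form $\E^a_{f_j}$ by replacing $f_j(\E_1,\dots,\E_m)$ with $f_j(a)$ and compute $\Follow(\E^a_{f_j},f_j,1)$. Since $f_j$ has rank $1$ in $\E^a_{f_j}$, Lemma~\ref{lemcomplex} bounds each such computation by $O(\mathrm{r}(f_j)\cdot|\E|)=O(|\E|)$. As the ranked alphabet, hence $\Sigma_0$, is treated as constant, there are only $O(1)$ relevant symbols $a$, so the whole preprocessing costs $O(|\E|)$. The crucial observation is that these context sets do not depend on $k$: they are determined solely by the position of $f_j$ in $\E$ and the chosen constant $a$, and are therefore computed once and reused.

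Then comes the assembly. For each $k$ I would read off $\firs{\E_k}$ and $\firstt{\E_k}$ from the $\mathrm{ZPC}$-Structure; because $\E_1,\dots,\E_m$ are the disjoint children of $f_j$ we have $\sum_{k=1}^{m}|\E_k|<|\E|$, so by Lemma~\ref{firstComput} and Lemma~\ref{l1} the cost of obtaining all the child data is $O(|\E|)$ in total. It remains to combine $\firstt{\E_k}$ with the union of the precomputed sets indexed by $\firs{\E_k}$, and to argue that the overall running time is $O(|\E|)$ rather than merely $O(|\E|+\mathrm{r}(f_j)\cdot||\E||)$.

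The main obstacle is precisely this last accounting step. A naive bound charges $O(||\E||)$ per value of $k$ to form $\bigcup_{a\in\firs{\E_k}}\Follow(\E^a_{f_j},f_j,1)$, which sums to $O(\mathrm{r}(f_j)\cdot||\E||)$ and need not be $O(|\E|)$. The route I would take to circumvent this is to exploit that the context sets are $k$-independent: since $\firs{\E_k}\subseteq\Sigma_0$ ranges over a constant-size alphabet, there are only $O(1)$ distinct unions $\bigcup_{a\in S}\Follow(\E^a_{f_j},f_j,1)$ over $S\subseteq\Sigma_0$, and these can all be built during preprocessing in $O(|\E|)$ and then shared without being recopied. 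The delicate point to make rigorous is the representation of the output: one must ensure that each $\Follow(\E,f_j,k)$ is delivered by pointing to the appropriate shared context set together with the small, child-local increment $\firstt{\E_k}$, so that the per-$k$ work reduces to $O(|\E_k|+|\firs{\E_k}|)$, which sums to $O(|\E|)$. Establishing that this implicit, shared representation is admissible — that the common context never has to be materialised separately for each $k$ — is the heart of the argument and the step I would treat with the most care.
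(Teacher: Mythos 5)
Your proposal follows the paper's own route for most of its length: both arguments build the $\mathrm{ZPC}$-structure once, precompute the rank-one sets $\Follow(\E^a_{f_j},f_j,1)$ for the constantly many relevant $a\in\Sigma_0$ (each in $O(|\E|)$ by Lemma~\ref{lemcomplex}, since $f_j$ has rank $1$ in $\E^a_{f_j}$), and then assemble each $\Follow(\E,f_j,k)$ from $\firstt{\E_k}$, $\firs{\E_k}$ and these shared context sets via Proposition~\ref{propFw}. Where you genuinely diverge is in the final accounting, and there your version is the stronger one. The paper charges $O(|\E_k|+||\E||)$ to each $k$ and then absorbs the resulting $O(\mathrm{r}(f_j)\cdot||\E||)$ term by asserting $\mathrm{r}(f_j)\cdot||\E||<|\E|$; that inequality fails in general (for $\E=g(f(a,\dots,a))$ with $g$ of rank $1$ and $f$ of rank $m$ one has $|\E|=m+2$, $||\E||=2$, $\mathrm{r}(f)\cdot||\E||=2m$), so the paper's proof of the per-symbol $O(|\E|)$ bound has a gap, even though the global $O(||\E||\cdot|\E|)$ theorem survives because $\sum_{f_j}\mathrm{r}(f_j)$ is bounded by $|\E|$. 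You instead exploit that the union $\bigcup_{a\in S}\Follow(\E^a_{f_j},f_j,1)$ depends only on $S\subseteq\Sigma_0$, of which there are $O(1)$ under the paper's standing assumption that $|\Sigma|$ is constant, precompute all such unions once in $O(||\E||)$, and hand back each $\Follow(\E,f_j,k)$ as a pointer to the appropriate shared union together with the child-local increment $\firstt{\E_k}$, for a per-$k$ cost of $O(|\E_k|+|\firs{\E_k}|)$ summing to $O(|\E|)$. This repairs the argument, at the price you yourself flag: the bound only holds for this implicit, shared output representation, since writing out every $\Follow(\E,f_j,k)$ explicitly can cost $\Theta(\mathrm{r}(f_j)\cdot||\E||)$, which is precisely the quantity the proposition cannot afford. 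That caveat should be stated alongside the proposition, but your accounting is correct and closes a real gap in the paper's proof.
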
    

Finally we can state the following theorem.

\begin{theorem}
Let $\E$ be a regular expression. The computation of the $\Follow$ sets for all symbol $f_j\in\Sigma^{\E}_{>}$ and $1\leq k\leq \mathrm{r}(f_j)$ can be done with an $O(|\E|\cdot||\E||)$ time complexity.
\end{theorem}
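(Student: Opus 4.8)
The plan is to derive the theorem as an immediate summation, using the preceding Proposition as the per-position building block. First I would recall that the preceding Proposition already establishes that, for one fixed position $f_j\in\Sigma^{\E}_{>}$, all the sets $\Follow(\E,f_j,k)$ for $1\leq k\leq \mathrm{r}(f_j)$ can be produced together in $O(|\E|)$ time. Hence the only remaining work is to count the positions and multiply.

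Next I would argue that the number of positions $f_j\in\Sigma^{\E}_{>}$ is exactly $||\E||$. This follows from the linearization: every symbol of rank greater than $0$ is marked distinctly and therefore occurs exactly once in $\b\E$, while by definition $||\E||=\sum_{f\in\Sigma_{>}}{|\E|}_f$ counts precisely these occurrences. Thus there are $||\E||$ positions to process, each at cost $O(|\E|)$.

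Summing the per-position cost over all positions then yields
\[
\sum_{f_j\in\Sigma^{\E}_{>}}O(|\E|)=||\E||\cdot O(|\E|)=O(||\E||\cdot|\E|),
\]
which is the claimed bound. I would also note that the one-time construction of the $\mathrm{ZPC}$-structure of Theorem~\ref{theorem-zpc}, which supplies the $\First$ information reused throughout, costs only $O(|\E|)$ and is absorbed into this bound.

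The point requiring the most care is ensuring that no cost is hidden or double counted across positions: specifically, that the $O(|\E|)$ preprocessing of Proposition~\ref{propFw} (computing $\Follow(\E^a_{f_j},f_j,1)$ for all relevant constants $a$) is charged once per position $f_j$ rather than once per pair $(f_j,k)$ or per $(f_j,a)$. Since that preprocessing is shared across all $k$ and all $a$ for a fixed $f_j$, and the subsequent assembly via the formula of Proposition~\ref{propFw} stays within the same $O(|\E|)$ envelope, multiplying by the number of positions is legitimate and the aggregate bound follows.
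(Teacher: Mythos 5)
Your proposal is correct and matches the paper's (essentially implicit) argument: the preceding proposition gives $O(|\E|)$ per position $f_j$ covering all $1\leq k\leq \mathrm{r}(f_j)$ at once, and since linearization yields exactly $||\E||$ positions in $\Sigma^{\E}_{>}$, summing gives $O(||\E||\cdot|\E|)$. Your added care that the per-position preprocessing of Proposition~\ref{propFw} is charged once per $f_j$ rather than per pair $(f_j,k)$ is exactly the point the paper relies on, so nothing is missing.
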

Our algorithm for the computation of the $\Follow$ sets can be used for the computation of the set of transition rules of the $k$-position automaton, the equation automaton~\cite{automate2,cie}, the $k$-c-continuation automaton~\cite{cie,arxiv} and the Follow automaton~\cite{arxiv}. 
\begin{remark} 
By analogy to the word case, we have chosen to don't consider the constant symbols $(\Sigma_0)$ in the alphabetic width of $\E$. For example for the regular expression $\E=f\underbrace{(a,\ldots,a)}_{a \ n\mbox{-times}}$, $||\E||=1$. However, in~\cite{automate2}, the alphabetic width is the number of occurrences of symbols of $\Sigma$ in $\E$, that is $||\E||=n+1$.   
\end{remark}

\section{Conclusion}
In this paper the notion of $k$-position tree automaton associated with the regular
tree expression has been recalled. This automaton is the generalization from words to trees of the position automaton introduced by Glushkov. We give an efficient algorithm that computes the $\Follow$ function from a regular expression $\E$ in $O(||\E||\cdot |\E|)$ time complexity. 

This algorithm for the computation of the $\Follow$ sets can be used for the computation of the set of transitions of the $k$-position, equation, $k$-c-continuation and $\Follow$ automata.  

\bibliographystyle{splncs_srt}
\bibliography{bibliography}
\end{document}